\documentclass[11pt]{amsart}

%%%%%%%%%%%%%%%%%%%
% Page Layout
%%%%%%%%%%%%%%%%%%%
 
\setlength{\paperwidth}{8.5in}
\setlength{\paperheight}{11in}
\setlength{\marginparwidth}{0in}
\setlength{\marginparsep}{0in}
\setlength{\oddsidemargin}{0in}
\setlength{\evensidemargin}{0in}
\setlength{\textwidth}{6.5in}
\setlength{\topmargin}{-0.5in}
\setlength{\textheight}{9in}

%%%%%%%%%%%%%%%%%%%%%%%%%%%%%%%%%%%
% Include Packages and Style Files
%%%%%%%%%%%%%%%%%%%%%%%%%%%%%%%%%%%

\usepackage[english]{babel}
\usepackage{amsmath,amssymb,amsthm}

\usepackage[T1]{fontenc}
\usepackage{caption}
\usepackage{times}
\usepackage{color,graphicx}
\usepackage{subfig,array}
\usepackage{amsthm,amssymb,amsmath,tensor}
\usepackage{hyperref}
\setcounter{MaxMatrixCols}{20}  %to deal with large matrix
\usepackage{tikz}
\usetikzlibrary{arrows,decorations.pathreplacing}

\tikzstyle{vertex}=[circle, draw, inner sep=0pt, minimum size=4pt]

\tikzset{every loop/.style={}}

%-----------------------------------------------------------------------------%
% Macros:
%-----------------------------------------------------------------------------%

\newcommand\tr{\mathop{\rm tr}\nolimits}

\def\Z{\mathbb{Z}}

\def\C{\mathbb{C}}

\def\R{\mathbb{R}}
\newcommand{\bes}{\begin{eqnarray*}}
	\newcommand{\ees}{\end{eqnarray*}}
\newcommand{\bpm}{\begin{pmatrix}}
	\newcommand{\epm}{\end{pmatrix}}

\def\diag{{\rm diag}\,}
\def\tr{{\rm tr}\,}

\def\modn{\,{\rm mod}\,}

\usepackage[numbers,sort&compress]{natbib}% Citation support using natbib.sty
\bibpunct[, ]{[}{]}{,}{n}{,}{,}% Citation support using natbib.sty
% Bibliography support using natbib.sty
\makeatletter% @ now becomes a letter
\def\NAT@def@citea{\def\@citea{\NAT@separator}}% Macro to suppress spaces between citations using natbib.sty
\makeatother% @ becomes a symbol again

\theoremstyle{plain}% Theorem-like structures provided by amsthm.sty
\newtheorem{theorem}{Theorem}[section]
\newtheorem{lemma}[theorem]{Lemma}
\newtheorem{corollary}[theorem]{Corollary}
\newtheorem{proposition}[theorem]{Proposition}

\theoremstyle{definition}

\newtheorem{example}[theorem]{Example}

\theoremstyle{remark}
\newtheorem{remark}{Remark}

\begin{document}

%\articletype{ARTICLE TEMPLATE}

\title{Perfect quantum state transfer in weighted paths with potentials (loops) using orthogonal polynomials}

\author{Steve Kirkland\textsuperscript{a}}
\author{Darian McLaren\textsuperscript{b}}
\author{Rajesh Pereira\textsuperscript{c}}
\author{Sarah Plosker\textsuperscript{a, b, c}}\email{ploskers@brandonu.ca}
\author{Xiaohong Zhang\textsuperscript{a}}
\address{\textsuperscript{a}Department of Mathematics, University of Manitoba, Winnipeg, MB, Canada  R3T 2N2; \textsuperscript{b}Department of Mathematics \& Computer Science, Brandon University, Brandon,  MB, Canada R7A 6A9; \textsuperscript{c}Department of Mathematics \& Statistics, University of Guelph, Guelph, ON, Canada N1G 2W1}

\begin{abstract}
A simple method for transmitting quantum states within a quantum computer is via a quantum spin chain---that is, a path on $n$ vertices. Unweighted paths are of limited use, and so a natural generalization is to consider weighted paths; this has been further generalized to allow for loops (\emph{potentials} in the physics literature). We study the particularly important situation of perfect state transfer with respect to the corresponding adjacency matrix or Laplacian through the use of orthogonal polynomials. Low-dimensional examples are given in detail. Our main result is  that PST with respect to the Laplacian matrix cannot occur for weighted  paths on $n\geq 3$ vertices nor can it occur for certain symmetric weighted trees. The methods used lead us to a conjecture directly linking the rationality of the weights of weighted paths on $n>3$ vertices, with or without loops, with the capacity for PST between the end vertices with respect to the adjacency matrix.
\end{abstract}

\keywords{quantum state transfer; perfect state transfer; orthogonal polynomials; weighted paths; energy potential}
\subjclass[2010]{% 	05C22   Signed and weighted graphs
   05C22;   %
 	05C50;  %Graphs and linear algebra (matrices, eigenvalues, etc.)
 %
 %15A16  Matrix exponential and similar functions of matrices
		15A18;  %Eigenvalues, singular values, and eigenvectors
		% 	33C45   	Orthogonal polynomials and functions of hypergeometric type (Jacobi, Laguerre, Hermite, Askey scheme, etc.)
 42C05; %Orthogonal functions and polynomials, general theory
		 81P45 %Quantum information, communication, networks
}

\maketitle

\section{Introduction}
A quantum spin chain can be used as a means of accomplishing the important task of transferring a quantum state from one place to another within a quantum computer \cite{Bose}. From a graph theoretic perspective, we are interested in paths on $n$ vertices. It was found \cite{CDEL} that unweighted paths governed by XX dynamics (where one considers the corresponding adjacency matrix) only exhibit perfect state transfer (PST; a desirable property when transferring a quantum state) for $n\leq 3$.

Several offshoots have developed: one avenue is to consider pretty good state transfer (PGST; essentially this amounts to being ``arbitrarily close'' to PST). In this regard,  a complete characterization of the parameters (length of the unweighted chain)
for which there is PGST from one end vertex to the other  was given in \cite{GKSS12}, where it was shown that PGST occurs on an unweighted chain with $n$ nodes if and only if $n+1$ is either a prime number, two times a prime number, or a power of two.

A second avenue is to consider graphs other than paths to see if they also exhibit PST, and as it turns out, many different families of graphs have this property. Some examples are: a family of double-cone non-periodic graphs, certain joins of regular graphs with $K_2$ or with the empty graph on two vertices \cite{Ang09, Ang10}. Also, necessary and sufficient conditions for circulant graphs (Cayley graphs on the group $\mathbf{Z}_n$) to exhibit PST have been given in \cite{BMPDSDcir,B13}. The Cartesian product of two graphs with PST both at time $t$ has also been shown to exhibit PST \cite{AlEtAl}, in particular the $n$-fold Cartesian product of $K_2$ with itself (the $n$-cube) has PST between its antipodal vertices. For a more general family of graphs-- cubelike graphs -- it was shown that if the sum of all the elements in the connection set is not 0, then there is PST in that graph \cite{BGS08}, and when the sum is 0, a necessary and sufficient condition for such a graph to admit PST is given in \cite{cubelike}. The PST property of Hadamard diagonalizable graphs (a graph whose Laplacian is diagonalizable by a Hadamard  matrix) has also been  studied in \cite{HadaPST}; a simple eigenvalue characterization for such graphs to admit PST at time $\pi/2$ is given, which was used to construct more graphs with PST. However, a path is arguably the simplest graph structure, and since the graphs would need to be realized physically within a quantum computer, it is desirable to proceed with paths when possible, in order to minimize the  amount
of physical and technological resources required.

Along this line,  a third avenue is to consider weighted paths (we use the term ``path'' to mean an unweighted path). In this paper, we always consider weighted or unweighted paths on $n$ vertices, with vertex set $\{1, 2, \cdots, n\}$. In  \cite{CDDEKL}, it was shown that for XX dynamics, PST  can be achieved over arbitrarily long distances by allowing for different, but fixed, couplings
between the qubits on the chain (the strengths of couplings correspond to
the edge weights of the underlying weighted graph of the network; the edge weights they used to achieve PST from vertex 1 to vertex $n$  were $w(j, j+1)=w(j+1, j)=\sqrt{j(n-j)}$ for each $j\in \{1,\dots, n-1\}$). The case of other  weights, as well as the addition of potentials (represented mathematically as weighted loops in the graph) remained open. It was conjectured in \cite{CLMS}, based on numerical evidence, that paths of arbitrary length $n$ can be made to have PST from vertex 1 to vertex $n$ by the addition of a suitable amount of energy (such \emph{energy shifts} are later \cite{KemptonPST, KemptonPGST} referred to as a potential function on the vertex set, or simply as potentials). This conjecture was then raised as an open problem in \cite{Godsil2011}.  Asymptotic \cite{asym} (taking the potentials at the endpoints arbitrarily large) and approximate \cite{approx} results gave affirmative answers to the conjecture in the respective settings.
Very recently, the conjecture was shown to be false for the PST setting \cite{KemptonPST} but true in the more relaxed setting of PGST \cite{KemptonPGST}.

We consider this third avenue from a matrix analysis point of view: weighted paths and weighted paths with potentials (loops)  amount to tridiagonal matrices with certain restrictions (e.g.\ the diagonal entries are necessarily zero for weighted or unweighted paths without potentials when considering XX dynamics; that is, when considering the adjacency matrix associated to the graph). Any symmetric tridiagonal matrix gives way to a three-term recurrence relation, and so our approach is to work with the orthogonal polynomials that arise by considering the tridiagonal matrix as an operator on the polynomial space.

Given the eigenvalues of a weighted path with or without loops satisfying certain conditions, several  algorithms exist  for constructing a tridiagonal matrix corresponding to a graph with PST: in \cite{dBG}, the authors produce formulas for calculating the weights of the discrete inner product that arises through the orthogonal polynomials; in \cite[Chapter 4]{Gladwell} two methods for computing the eigenvectors needed in the inversion procedure are reviewed, in addition to the method in \cite{dBG}, and in \cite{vinet}, the authors use the Euclidean algorithm starting with $p_n$ and $p_{n-1}$ in order to find $p_{n-2}$ and then repeat until all the orthogonal polynomials have been found.

 Using an approach similar to that found in \cite{orthogpolyGTLZ}, we obtain formulas, in terms of the eigenvalues, for the weight of the edge between vertices $\lfloor\frac{n}2\rfloor$ and $\lfloor\frac{n}2+1\rfloor$ and the potential (if allowing for loops) at vertex $\lceil\frac{n}2\rceil$. This allows one to determine the ``middle'' weights of the weighted path (with or without loops) without the need to calculate all the orthogonal polynomials or the weights of the inner product. We give examples for $n<5$ illustrating the utility of our formulas.
 We then show that if we consider XXX dynamics (that is, we consider the Laplacian associated to the graph), a weighted path on at least three vertices cannot have PST. We extend this result to apply to symmetric weighted trees.  For  XX dynamics, our analysis leads us to propose the following conjecture:  weighted paths on at least four vertices with or without loops must have at least one irrational weight in order to have adjacency matrix PST at a fixed readout time $\pi$; we confirm  this conjecture for $n=4$ as well as for $n\equiv 3\modn 8$ and for $n\equiv 5\modn 8$.
These results shed light on the nature of weighted  paths with or without loops that exhibit PST between the endpoints.

Throughout this paper,
in the setting of adjacency matrices, we consider paths with or without loops. In the Laplacian matrix setting, we only consider paths without loops.

%Also, due to symmetry, PST between the endpoints $1$ and $n$ implies PST between internal nodes $j$ and $n+1-j$ (see Proposition \ref{prop:internal}).

The paper is organized as follows. In Section \ref{sec:prelim} we review preliminary information necessary for our results. In Section \ref{sec:alg}, we give our formulas for constructing the middle weights of the weighted path and provide examples for small $n$ ($n=2$ and $n=3$) for both the adjacency matrix and Laplacian matrix settings. In Section \ref{sec:thmL}, we prove that a weighted path  with at least three vertices cannot have Laplacian PST and generalize the result to mirror symmetric trees. In Section \ref{sec:conjrat}, we consider further examples in the adjacency matrix setting ($n=4$ and $n=5$) and  prove a negative result for rational-weighted paths with or without potentials with respect to adjacency matrix PST; we conjecture that this holds for all $n\geq 4$.

\section{Preliminaries}\label{sec:prelim}

Given a quantum system, the Hamiltonian $H$ is a matrix representing the total energy of the system; its spectrum represents the possible measurement outcomes when one measures the total energy. The  dynamics of the system lead us to consider either the adjacency matrix or the Laplacian of the graph corresponding to the system.

Let $G$ be an undirected graph on $n$ vertices ($G$ here can be either weighted or unweighted). The corresponding \emph{adjacency matrix} is an $n\times n$ matrix $A=[a_{jk}]$ whose entries satisfy $a_{jk}=  w(j,k)$, where $w(j,k)$ is the weight of the edge between vertex $j$ and vertex $k$ (if there is no edge between the two vertices, then $w(j,k)=0$; if the graph is unweighted then all edges are taken to have weight 1).  The \emph{degree} of a vertex is the sum of the weights of the incident edges, and we can create a diagonal degree matrix $D$ whose $(j,j)$ entry is the degree of vertex $j$. The \emph{Laplacian matrix} (or simply, the Laplacian) corresponding to a simple graph $G$ is $L=D-A$, which is a positive semidefinite matrix with smallest eigenvalue zero (its multiplicity is equal to the number of connected components of the graph---and thus equal to one herein---with the all-ones vector as its eigenvector).

We are interested in the matrix exponential $e^{itH}$, where $H$ is the Hamiltonian of the system (i.e., the adjacency matrix or the Laplacian matrix, depending on the dynamics) and $t$ is the readout time. Let $1\leq j<k\leq n$. There is \emph{perfect state transfer} from vertex $j$ to vertex $k$ if there exists some time $t=t_0$ such that $|e_j^T e^{it_0H} e_k|^2=1$, where $\{e_\ell\}_{\ell=1}^n$ is the standard ordered basis. There is \emph{pretty good state transfer} from vertex $j$ to vertex $k$ if for any $\epsilon>0$, there exists a time $t=t_\epsilon$ such that $|e_j^T e^{it_\epsilon H} e_k|^2>1-\epsilon$. Note that, because $G$ is undirected, PST (or PGST) occurs from vertex $j$ to vertex $k$ if and only if it occurs from vertex $k$ to vertex $j$.

Here, we focus on two settings: weighted or unweighted paths governed by XXX dynamics; and weighted or unweighted paths, that may or may not   have loops, governed by XX dynamics. In both settings, the vertices of the path are labelled so that vertex $j$ is adjacent to vertex $j+1, j=1, \ldots, n-1.$ As a result,  our Hamiltonian will always be a tridiagonal matrix of one of the following two forms depending on the dynamics ($A$ for XX dynamics, $L$ for XXX dynamics)
\begin{eqnarray}\label{eq:H}
A=\begin{bmatrix}
q_1&r_1&&&&&\\
r_1&q_2&r_2&&&&\\
&r_2&q_3&r_3&&&\\
&&&\ddots&&&\\
&&&&&r_{n-1}\\
&&&&r_{n-1}&q_n
\end{bmatrix},
L=\begin{bmatrix}
q_1&-r_1&&&&&\\
-r_1&q_2&-r_2&&&&\\
&-r_2&q_3&-r_3&&&\\
&&&\ddots&&&\\
&&&&&-r_{n-1}\\
&&&&-r_{n-1}&q_n
\end{bmatrix}
\end{eqnarray}
For the adjacency matrix case, all  $q_j=0$ for  (unweighted or weighted) paths without potentials. A potential at vertex $j$ corresponds to an entry $q_j\neq 0$. For the Laplacian matrix case, $q_1=r_1$, $q_j=r_{j-1}+r_j$ for $j=2, 3, \dots, n-1$, and $q_n=r_{n-1}$. In both cases, $r_j>0$ denotes the weight of the edge between vertex $j$ and $j+1$.
%. Unweighted paths   have all $r_j=1$ when $H=A$ (and all $r_j=-1$ when $H=L$).

As in \cite{orthogpolyGTLZ}, we note that both of the above two symmetric tridiagonal $n\times n$ matrices are connected to a set of $n$ orthogonal polynomials via the three term recurrence given by
\begin{eqnarray}\label{eq:3termrec}
p_k(x)=(x-q_k)p_{k-1}(x)-r_{k-1}^2p_{k-2}(x) \quad\textnormal{ for all $k\in \{1, \dots, n\}$}
\end{eqnarray}  where we define $p_{-1}(x)=0$ and $p_0(x)=1$. We will denote the eigenvalues of $A$ or $L$ (the roots of $p_n(x)$) by  $\alpha_r$.

Rearranging equation (\ref{eq:3termrec}), we find that $xp_{k-1}(x)=p_k(x)+q_kp_{k-1}(x)+r_{k-1}^2p_{k-2}(x)$, and thus we can consider the operator
\begin{eqnarray}\label{eq:opM}
M=\begin{bmatrix}
q_1&1&&&&\\
r_1^2&q_2&1&&&\\
&r_2^2&q_3&1&&\\
&&\ddots&\ddots&\ddots&\\
&&&r_{n-2}^2&q_{n-1}&1\\
&&&&r_{n-1}^2&q_n
\end{bmatrix}.
\end{eqnarray}
The matrix $M$ represents multiplication by $x$ ($\modn p_n(x)$) in the basis $\{p_0(x),\dots, p_{n-1}(x)\}$.   We note that $A$ is similar to $M$ via $QM=AQ$ where $Q=\diag(d_1, \dots, d_n)$ and
\begin{eqnarray}\label{eqndone}
d_j=\begin{cases}
      \displaystyle\frac{1}{\prod_{\ell=1}^{j-1} r_\ell} & \textnormal{ if }j\neq 1 \\
      1 &\textnormal{ if }j=1
   \end{cases}
\end{eqnarray}

 The Matrix $L$ is also similar to $M$ via $TM=LT$ where $T=\diag(d_1, \dots, d_n)$ and
\begin{eqnarray}\label{eqndtwo}
d_j=\begin{cases}
      \displaystyle\frac{(-1)^{j-1}}{\prod_{\ell=1}^{j-1} r_\ell} & \textnormal{ if }j\neq 1 \\
      1 &\textnormal{ if }j=1
   \end{cases}
\end{eqnarray}

Letting $H$ denote the matrix $A$ or $L$ in equation (\ref{eq:H}), we also note that the eigenvalues of $H$ are real and  distinct since  $r_j\neq 0$ for all $j$ (see, e.g.\ \cite[Chapter 4]{Atkinson}); this allows for the multiplication by $x$ viewpoint to hold.
 We then use this distinctness to order the eigenvalues as follows:
 \begin{eqnarray}\label{eq:ordalpha}
 \alpha_1<\alpha_2<\cdots<\alpha_n
 \end{eqnarray}
Note that the eigenvector of $M$ associated to the eigenvalue $\alpha_j$ is $w_j=[p_0(\alpha_j),\;p_1(\alpha_j),\; \cdots,\; p_{n-1}(\alpha_j)]^T$. This can be verified by computing $Mw_j$, and then using the  recurrence relation (\ref{eq:3termrec}) evaluated at $\alpha_j$ to simplify each term.
%The key advantage of this interpretation of the Hamiltonian is that any polynomial $f(x)$ can be written uniquely as $f(x)=\sum_{j=0}^n c_j p_j(x)$, where the $c_j$ are given by $M$, and any polynomial function $f(M)$ in this matrix is just multiplication by $f(x)$ on the quotient space of polynomials (quotiented out by $p_n(x)$).

Now, let us consider the set of polynomials $S=\{\tilde{p}_0(x),\dots,\tilde{p}_{n-1}(x)\}$ with $\tilde{p}_k(x)=d_{k+1} p_k(x)$, where the $d$'s are given by Equation (\ref{eqndone}) if we are taking $H=A$ and the $d$'s are given by Equation (\ref{eqndtwo}) if we are taking $H=L$. The set $S$ is a basis of the vector space of all polynomials of degree less than $n$.  In the basis $S$, the matrix that represents the multiplication by the $x$ ($\modn p_n(x)$) operator  is exactly $H$.

%Now, let us consider a set of basis polynomials where $H$ (rather than $M$) is the multiplication by $x$ ($\modn p_n(x)$) operator. These polynomials are denoted by $\tilde{p}_k,$ and are found through the simple transformation $\tilde{p}_k(x)=d_{k+1} p_k(x)$. Taking $\{\tilde{p}_0(x),\dots,\tilde{p}_{n-1}(x)\}$ as our new basis elements

Let $H$ denote the matrix $A$ or $L$ in Equation (\ref{eq:H}). Let $v_r=[\tilde{p}_0(\alpha_r),\tilde{p}_1(\alpha_r),\ldots,\tilde{p}_{n-1}(\alpha_r)]^T$, then the two vectors $v_r$ and $v_s$ are orthogonal to each other for any $r\neq s$. Normalizing these vectors, assume the factors are $\sqrt{\kappa_j}$, $j=1,\ldots, n$, respectively, then the matrix $V=[\sqrt{\kappa_1}v_1,\sqrt{\kappa_2}v_2,\ldots, \sqrt{\kappa_n}v_n]$ is an orthogonal matrix, and it diagonalizes the Hamiltonian $H$ to the diagonal matrix $\Lambda=\diag(\alpha_1,\ldots,\alpha_n)$, i.e., $V^THV=\Lambda$. If there is PST between vertex $j+1$ and $k+1$ at time $t=t_0$, then $1=|e_{j+1}^Te^{it_0H}e_{k+1}|=|e_{j+1}^TVe^{it_0\Lambda}V^Te_{k+1}|=|r_{j+1}e^{it_0\Lambda}r^T_{k+1}|$, where $r_j$ is the $j$-th row of the matrix $V$. From the Cauchy-Schwarz inequality, we know $r_{j+1}e^{it_0\Lambda}=e^{i\phi}r_{k+1}$ for some phase factor $\phi$, which can be rewritten as 
\begin{eqnarray}
\frac{\tilde{p}_k(\alpha_r)}{\tilde{p}_j(\alpha_r)}=e^{-i\phi}e^{it_0\alpha_r}
\end{eqnarray}
for $r=1,2,\ldots, n$,  and some phase factor $\phi$. Since the polynomials $\tilde{p}_j(x)$ are real, it follows that $\tilde{p}_k(\alpha_r)/\tilde{p}_j(\alpha_r)=\pm 1$.
In fact, if we consider the case of PST between the endpoints, it is well known \cite{vinet} that $\tilde{p}_{n-1}(\alpha_r)=(-1)^{n+r}$. Looking at two neighbouring eigenvalues $\alpha_r$ and $\alpha_{r-1}$,  Kay \cite{Kay2010} found that
$\alpha_r-\alpha_{r-1}=(2m_r+1)\pi/t_0$ where $m_r$ is any nonnegative integer.
Here we scale the Hamiltonian ($A$ or $L$ depending on the dynamics) by a factor $t_0/\pi$ so that the PST time is $\pi$, and we therefore look at the simpler expression \begin{eqnarray}\label{eq:Kayodd}
\alpha_r-\alpha_{r-1}=(2m_r+1).
\end{eqnarray}

It is known \cite[Lemma 2]{Kay2010} %that for weighted or unweighted paths, $H$ must be persymmetric
 that for a symmetric tridiagonal Hamiltonian $H$, if PST occurs between the end vertices, then $H$ must also be persymmetric (symmetric about the anti-diagonal; such persymmetric matrices are also called mirror symmetric in the literature). In the case of a weighted path having no potentials and  governed by XX dynamics (therefore $q_1, \dots, q_n$ are all zeros),
%Let us assume for the moment that  the weighted path has no potentials and that we are . T
the associated graph is then bipartite, and by properties of bipartite graphs the eigenvalues are symmetric about zero.
In this case, we give the eigenvalues another set of labels as follows
\begin{eqnarray}\label{eq:alphas}
-\beta_{\frac{n}2}<\cdots<-\beta_2<-\beta_1<0<\beta_1<\beta_2<\cdots<\beta_{\frac{n}2}
\textnormal{, for $n$ even }\nonumber \\
-\beta_{\frac{n-1}2}<\cdots<-\beta_2<-\beta_1<\beta_0=0<\beta_1<\beta_2<\cdots<\beta_{\frac{n-1}2} \textnormal{, for $n$ odd }
 \end{eqnarray}
  (we use zero as the index of the zero eigenvalue in the  case that $n$ is odd; zero does not appear as an eigenvalue in the  case that $n$ is even). From now on, when we mention the eigenvalues as $\alpha_r$, we mean the ones ordered as in (\ref{eq:ordalpha}); and when we mention eigenvalues $\beta_r$ we mean the ones as in (\ref{eq:alphas}).
  If $n$ is even then  (\ref{eq:Kayodd}) and (\ref{eq:alphas}) yield the fact that  $\beta_1-(-\beta_1)=2\beta_1=(2m_1+1)$, and therefore $\beta_1=(2m_1+1)/2$. Using this, we find
\begin{eqnarray}
\beta_2-\beta_1&=&(2m_2+1)\\
\Rightarrow \beta_2&=&(2m_2+1)+\frac{(2m_1+1)}2\\
&=&\frac{(4m_2+2m_1+3)}2.
\end{eqnarray}
Following this, we see that if $n$ is even, all $\beta_r$ will be odd multiples of $1/2$. In fact, one can easily show by continuing the analysis of $\beta_r-\beta_{r-1}$, that the $\beta_r$ alternate between $1\modn 4$  times $1/2$ and $3\modn 4$ times $1/2$ (these give us alternating $\pm i$ when considering   $e^{i\pi\beta_r}$ in the matrix exponential $e^{i\pi H}$). A similar analysis shows that if $n$ is odd, the $\beta_r$ are even multiples of $1/2$, alternating between  $0\modn 4$  times $1/2$ and $2\modn 4$ times $1/2$ (these give us alternating $\pm 1$ when considering $e^{i\pi\beta_r}$), with $\beta_0=0\equiv 0\mod 4$.
We summarize this in the following remark:

\begin{remark}\label{rmk:multpi2}
For the adjacency matrix of a weighted path  without potentials that exhibits PST between the end vertices at time $\pi$, the eigenvalues $\beta_r$ adhere to the following pattern: for $n$ even, the $\beta_r$ alternate between $(1\modn 4)\times 1/2$ and $(3\modn 4)\times 1/2$, while for $n$ odd, the $\beta_r$ alternate between  $(0\modn 4)\times 1/2$ and $(2\modn 4)\times 1/2$.
\end{remark}

%Let us now consider the case where the tridiagonal matrix has some non-zero entries in the diagonal ($q_j\neq 0$ for some $j$); this is the case for the adjacency matrix of a weighted path  with potentials as well as the Laplacian of a weighted path with no potentials. In this case, equation (\ref{eq:Kayodd}) tells us that the eigenvalues alternate between even and odd multiples (or odd and even) of $\pi$. Furthermore, since a Laplacian matrix is positive semidefinite with smallest eigenvalue 0 (with multiplicity one since the graph is connected),  the sequence of ordered eigenvalues begins with an even (zero), and then alternates odd, even, \dots, even for the remaining eigenvalues.

For the adjacency matrix of a weighted path with loops, we can shift all the eigenvalues (by adding a multiple of the identity) such that the smallest eigenvalue is an integer; equation (\ref{eq:Kayodd}) then tells us they must alternate even and odd. This new weighted path (possibly with potentials) will exhibit PST if and only if the original path does. The eigenvalues can then be assumed to be integers with alternating parity without loss of generality. In the case of XXX dynamics, $L$ is positive semi-definite with smallest eigenvalue 0 (with multiplicity 1 since the graph is connected). Using this together with equation (\ref{eq:Kayodd}), we know the integer sequence of ordered eigenvalues $\alpha_r$ begins with 0 (even number) and then alternates odd, even, odd, ... for all the remaining eigenvalues. We summarize this in the following remark:

\begin{remark}\label{rmk:multpi}
If a weighted path with potentials exhibits PST at time $\pi$, then the eigenvalues $\alpha_r$ of its adjacency matrix can be taken to alternate between even and odd (or odd and even) integers. Without loss of generality, for notational simplicity, we can shift so that the odd-indexed eigenvalues are odd, and the even-indexed eigenvalues are even (so $\alpha_1, \dots, \alpha_n$ alternate between odd and even). If the Laplacian of a weighted path with no potentials exhibits PST at time $\pi$, then the eigenvalues $\alpha_r$ alternate between even and odd integers (starting with the smallest eigenvalue: zero).
\end{remark}

\begin{proposition}\label{prop:internal}
For a weighted path with or without potentials, PST between vertices $1$ and $n$ implies PST between vertices $j$ and $n+1-j$, for each $j=2, \ldots, n-1$.
If for some $j$ with $2 \le j \le n-1$ there is PST between vertices $j$ and $n+1-j$, and if in addition none of the eigenvectors of the Hamiltonian has a zero entry in the $j$--th position, then the converse holds.
\end{proposition}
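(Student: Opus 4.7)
For the forward direction, I would invoke Kay's theorem (\cite[Lemma 2]{Kay2010}), which forces a Hamiltonian with PST between its end vertices to be persymmetric. Persymmetry, combined with the simplicity of the eigenvalues of an irreducible symmetric tridiagonal matrix, yields eigenvectors satisfying $Jv_r = \epsilon_r v_r$, i.e.\ $v_{r,n+1-k} = \epsilon_r v_{r,k}$ with $\epsilon_r\in\{-1,+1\}$ depending only on $r$. The spectral expansion
\begin{equation*}
e_j^T e^{it_0 H} e_{n+1-j} = \sum_r e^{it_0 \alpha_r}\, v_{r,j}\, v_{r,n+1-j} = \sum_r e^{it_0 \alpha_r}\, \epsilon_r\, v_{r,j}^2
\end{equation*}
then has modulus $1$ because the analysis preceding Remark \ref{rmk:multpi2} shows that PST between $1$ and $n$ forces $e^{it_0\alpha_r}\epsilon_r$ to equal a common unimodular constant $c$; the sum reduces to $c\sum_r v_{r,j}^2 = c$, yielding PST between $j$ and $n+1-j$ at the same time $t_0$.

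For the converse, the equality case of Cauchy--Schwarz applied to the analogous spectral expansion forces $v_{r,n+1-j} = \epsilon_r v_{r,j}$ with $\epsilon_r\in\{-1,+1\}$ and $e^{it_0\alpha_r}\epsilon_r = c$ a common unimodular constant, for every $r$ with $v_{r,j}\neq 0$. The hypothesis that no eigenvector vanishes at position $j$ ensures this for every $r$ and is equivalent to $e_j$ being a cyclic vector for $H$ (via the standard correspondence between cyclic vectors and spectral measures for simple-spectrum symmetric matrices). Define the real symmetric orthogonal involution $R = \sum_r \epsilon_r v_r v_r^T$, so that $e^{it_0 H} = c\, R$. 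A direct calculation using $\epsilon_r v_{r,j} = v_{r,n+1-j}$ gives $R e_{n+1-j} = e_j$ and, by symmetry of $R$, $R e_j = e_{n+1-j}$. Cyclicity of $e_j$ implies that $R$ lies in the polynomial algebra $\mathbb{R}[H]$; write $R = p(H)$ with $p(\alpha_r) = \epsilon_r$.

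To conclude PST between $1$ and $n$ it suffices to show $R e_1 = \pm e_n$, since then $|e_1^T e^{it_0 H} e_n| = |c\,(R e_n)_1| = 1$. The plan is to bootstrap persymmetry of $H$ outward from the known pair $(e_j, e_{n+1-j})$. Applying $RH = HR$ to $e_j$ and reading off the $e_{n+1-j}$-component immediately forces $q_j = q_{n+1-j}$; the remaining components determine $r_{j-1} R e_{j-1} + r_j R e_{j+1}$ as an explicit linear combination of $e_{n-j}$ and $e_{n+2-j}$. Invoking the orthonormality of the columns of $R$ (so that $\|R e_{j\pm 1}\| = 1$ and $\langle R e_{j-1}, R e_{j+1}\rangle = 0$) then pins these vectors down up to a sign and yields $r_{j-1}^2 + r_j^2 = r_{n-j}^2 + r_{n+1-j}^2$. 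Iterating — applying $RH = HR$ to $e_{j\pm 1}, e_{j\pm 2}, \ldots$ together with the orthonormality of successive columns of $R$ — propagates the persymmetry equalities $q_{j\pm k} = q_{n+1-j \mp k}$ and $r_{j\pm k} = r_{n-j \mp k}$ outward, and simultaneously fixes $R e_k = \pm e_{n+1-k}$ at each step. Once the cascade reaches the endpoints, $R = \pm J$, whence $R e_1 = \pm e_n$. The main obstacle is the bookkeeping of this outward induction: at each layer the new unknown entries of $R e_{j\pm k}$ must be exactly matched by the scalar equations arising from $RH = HR$ and the orthonormality of $R$'s columns, with special care needed when the indices $j\pm k$ reach the boundary $\{1,n\}$ or collide with $n+1-j\mp k$.
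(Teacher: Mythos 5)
Your forward direction is correct and is essentially the paper's argument in different notation: the paper likewise invokes Kay's lemma to get persymmetry of $H$, cites Cantoni--Butler for the symmetry/antisymmetry of the eigenvectors, and then observes that the resulting ratio identity $\tilde{p}_{j-1}(\alpha_r)/\tilde{p}_{n-j}(\alpha_r)=\tilde{p}_0(\alpha_r)/\tilde{p}_{n-1}(\alpha_r)=\pm 1=e^{i(\pi\alpha_r-\hat{\phi})}$ transfers the end-vertex phase condition to the internal mirror pair; your spectral expansion $\sum_r e^{it_0\alpha_r}\epsilon_r v_{r,j}^2=c$ is the same computation.

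The converse is where you diverge from the paper, and where there is a genuine gap. The paper's converse simply runs the ratio identity backwards: under the no-zero-entry hypothesis the quotient $\tilde{p}_{n-j}(\alpha_r)/\tilde{p}_{j-1}(\alpha_r)$ is defined for every $r$ and equals $\tilde{p}_{n-1}(\alpha_r)/\tilde{p}_0(\alpha_r)$, so the phase condition supplied by PST at the internal pair is literally the condition needed at the end pair; no reconstruction of $H$ is attempted. You instead try to rebuild persymmetry of $H$ and the relations $Re_k=\pm e_{n+1-k}$ by an outward induction from $RH=HR$, and the very first step does not close as described. From $RHe_j=HRe_j$ you correctly obtain $q_j=q_{n+1-j}$ (using $\langle Re_{j\pm1},e_{n+1-j}\rangle=\langle e_{j\pm 1},Re_{n+1-j}\rangle=0$) and the single vector identity $r_{j-1}Re_{j-1}+r_jRe_{j+1}=r_{n-j}e_{n-j}+r_{n+1-j}e_{n+2-j}$; but this identity, together with $\lVert Re_{j\pm1}\rVert=1$, $\langle Re_{j-1},Re_{j+1}\rangle=0$, and orthogonality to the two known columns, does \emph{not} pin $Re_{j\pm1}$ down up to sign. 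It yields only the scalar relation $r_{j-1}^2+r_j^2=r_{n-j}^2+r_{n+1-j}^2$, and for $n\ge 5$ there is a whole continuum of orthonormal pairs $(Re_{j-1},Re_{j+1})$ satisfying all of these constraints, so the later layers of the induction have nothing firm to stand on; the ``bookkeeping'' you flag is not mere bookkeeping. The missing input is exactly the cyclicity you mention but do not exploit: since $e_j$ is cyclic, $R$ is determined by $Re_j=e_{n+1-j}$ together with $R\in\mathbb{R}[H]$, or equivalently the sign function $\epsilon$ satisfies $\epsilon\,\tilde{p}_{j-1}=\tilde{p}_{n-j}$ on the spectrum and one must deduce $\epsilon\,\tilde{p}_0=\pm\tilde{p}_{n-1}$ from that. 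The paper gets this deduction for free from the eigenvector symmetry, which is why its converse is a single sentence; if you insist on reconstructing $R$ column by column you need an argument at the level of the polynomial relation, not the entrywise one.
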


It is an open question if the converse holds even if there are eigenvectors with zeros in the jth place.  We note that in the more relaxed setting of PGST, this more general conjectured version of the converse fails to be true \cite{internal}: PGST can occur between internal vertices of paths in the absence of PGST between the end vertices.

\begin{proof}
Consider the matrix $M$ in equation (\ref{eq:opM}). Recall that the eigenvector $w_r$ of this matrix corresponding to the eigenvalue $\alpha_r$ is
\begin{eqnarray}
{w}_r=\begin{bmatrix}
p_0(\alpha_r)\\
\vdots\\
p_{n-1}(\alpha_r)
\end{bmatrix}.
\end{eqnarray}
%This can be verified by computing $M{w}_r$, and then using the recurrence relation evaluated at $\alpha_r$ to simplify each term.
Since $H=QMQ^{-1}$ (for $H=A$)  or $H=TMT^{-1}$ (for $H=L$), the eigenvector ${v}_r$ for $H$ corresponding to the eigenvalue $\alpha_r$ is
\begin{eqnarray}
{v}_r&=&Q{w}_r \quad \textnormal{ (or ${v}_r=Tw_r$)}\nonumber\\
&=&\begin{bmatrix}
d_1p_0(\alpha_r)\\
\vdots\\
d_np_{n-1}(\alpha_r)
\end{bmatrix}.
\end{eqnarray}
Now, if we assume that there is PST between the endpoints then $H$ must be mirror symmetric. The eigenvectors will therefore be either symmetric or antisymmetric (i.e $(v_r)_j=\pm(v_r)_{n-j+1}$, $j=1,2,\dots,n$) \cite[Theorem 2]{CantoniButler}, and recall that $\tilde{p}_j(x)=d_{j+1}p_j(x)$, we have either $\tilde{p}_{j-1}(\alpha_r)$ and $\tilde{p}_{n-j}(\alpha_r)$ are both zero, or neither of them is zero and for some phase factor $\hat{\phi}$ they satisfy
\begin{eqnarray}\label{eq:alternate pm1}
\frac{\tilde{p}_{j-1}(\alpha_r)}{\tilde{p}_{n-j}(\alpha_r)}=
\frac{\tilde{p}_0(\alpha_r)}{\tilde{p}_{n-1}(\alpha_r)}=\pm 1=e^{i(\pi\alpha_r-\hat{\phi})}.
\end{eqnarray}
The above is valid for all $\alpha_r$ and $j$ such that $\tilde{p}_{n-j}(\alpha_r)\neq 0$, and the quotients share the same alternating pattern between 1 and $-1$ determined by the PST between the end vertices; hence there is perfect state transfer between the vertices $j$ and $n+1-j$. %(the plus one comes from the fact that the index for $\tilde{p}$ is one less then the associated vertex).

The steps above are all reversible under certain conditions: if there is PST between a pair of  inner vertices $j$  and $n+1-j$ for some $2\leq j\leq n-1$, and if  $p_{j-1}(\alpha_r)\neq 0$ for all $\alpha_1,\,\cdots,\,\alpha_n$ (and therefore $p_{n-j}(\alpha_r)\neq 0$ as well), then equation (\ref{eq:alternate pm1}) is true for all $\alpha_r$ and the given $j$, and therefore there is PST between the two end vertices. %, provided no entry of the eigenvector $v_r$ is zero.
\end{proof}

Referring to the proof of Proposition \ref{prop:internal}, we observe in passing that if $p_{j-1}(\alpha_r)=p_{n-j}(\alpha_r)=0$ for some $r$, then although the eigenvector symmetry/antisymmetry condition still holds,  it does not provide the $e^{i(\pi\alpha_r-\hat{\phi})}=\pm1$ constraints on the eigenvalues needed for PST between end vertices.

\section{Constructing matrices guaranteed to have PST for weighted paths with or without loops}\label{sec:alg}
Given a set of eigenvalues (with restrictions given from equation (\ref{eq:Kayodd})), we would like to reconstruct  the adjacency matrix of a weighted path with or without potentials, %or the Laplacian of a weighted path without potentials,
that is guaranteed to have PST between vertices $1$ and $n$.
%Further we will show that for any weighted path, governed by XXX dynamics, on at least three vertices, there is no PST between vertex 1 and $n$.
That is, by choosing values for $\alpha_1,\dots, \alpha_n$ satisfying  Equation (\ref{eq:Kayodd}) (these will correspond to the eigenvalues of the adjacency matrix), one can reverse-engineer weighted paths, with or without potentials, having PST.
We go through the low-dimensional cases in detail in this section and the next section.

We next state a technical result that is especially helpful in analyzing the eigenvalues of matrices that are persymmetric.

\begin{lemma}\cite[Lemma 3]{CantoniButler}\label{lem:CB}
\begin{enumerate} Let $R$ be the \emph{reversal matrix}: an antidiagonal matrix with all ones along the antidiagonal.
\item If $n$ is even, the matrices $B=\begin{bmatrix}
E&RCR\\
C& RER
\end{bmatrix}$ and $\begin{bmatrix}
E-RC&0\\0&E+RC
\end{bmatrix}$ are orthogonally similar, where $C$ is some $\frac{n}2\times \frac{n}2$ matrix, and $R$ is also $\frac{n}2\times \frac{n}2$.
\item If $n$ is odd, the matrices $B=\begin{bmatrix}
E&x&RCR\\x^T&q&x^TR\\C&Rx&RER
\end{bmatrix}$ and $\begin{bmatrix}
E-RC&0&0\\0&q&\sqrt{2}x^T\\0&\sqrt2x& E+RC
\end{bmatrix}$ are orthogonally similar, where $C$ is some $\frac{n-1}2\times \frac{n-1}2$ matrix, $R$ is also $\frac{n-1}2\times \frac{n-1}2$, $q\in \mathbb{R}$, and $x\in \mathbb{R}^{\frac{n-1}2}$.
\end{enumerate}
\end{lemma}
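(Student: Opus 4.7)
The plan is to prove both parts by exhibiting an explicit orthogonal change-of-basis matrix $P$ that is naturally adapted to the involution given by the full $n \times n$ reversal matrix $J$; conjugating $B$ by $P$ will produce the stated block-diagonal (or nearly block-diagonal) form. The guiding observation is that $J$ is symmetric and orthogonal with $J^2 = I$, so its $\pm 1$ eigenspaces $V_\pm$ give an orthogonal direct-sum decomposition of $\mathbb{R}^n$, of dimensions $\lfloor n/2 \rfloor$ and $\lceil n/2 \rceil$. A direct block computation using $R^2 = I$ shows that in both cases the matrix $B$ commutes with $J$ (equivalently, $B$ is mirror symmetric), so $B$ preserves each of $V_-$ and $V_+$ and must therefore be block diagonal in any orthonormal basis aligned with the decomposition $V_- \oplus V_+$.

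For part $(1)$, with $n$ even, I would take
\[
P = \frac{1}{\sqrt{2}}\begin{bmatrix} I & I \\ -R & R \end{bmatrix},
\]
whose first $n/2$ columns span $V_-$ and whose last $n/2$ columns span $V_+$. The identity $P^T P = I$ follows immediately from $R^T = R$ and $R^2 = I$. Computing $BP$ and then $P^T(BP)$ block by block, every off-diagonal block collapses to $0$ through $R^2 = I$, while the two diagonal blocks simplify to $E - RC$ and $E + RC$, yielding the claimed orthogonal similarity.

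For part $(2)$, with $n$ odd, the same idea applies, except that one must single out the one-dimensional ``middle'' summand corresponding to the fixed direction of $J$. Here I would use
\[
P = \frac{1}{\sqrt{2}}\begin{bmatrix} I & 0 & I \\ 0 & \sqrt{2} & 0 \\ -R & 0 & R \end{bmatrix},
\]
and again verify $P^T P = I$ using $R^2 = I$. A block multiplication then produces $E - RC$ in the upper-left block, $E + RC$ in the lower-right block, $q$ in the middle, and $\sqrt{2}\,x$ and $\sqrt{2}\,x^T$ in the middle column and middle row; every other block vanishes by the same $R^2 = I$ cancellations used in part $(1)$.

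The main (mild) obstacle is simply guessing the correct $P$: in particular, the $\sqrt{2}$ scaling attached to the middle row and column in the odd case is forced by the twin requirements that $P$ be orthogonal and that the target matrix display precisely the coefficient $\sqrt{2}$ in front of $x$ and $x^T$. Once the right $P$ is written down, the entire proof reduces to a bookkeeping exercise in $2 \times 2$ (or $3 \times 3$) block matrix multiplication, with every nontrivial cancellation coming from the single identity $R^2 = I$.
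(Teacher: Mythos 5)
Your proof is correct: both proposed matrices $P$ are orthogonal (by $R^T=R$ and $R^2=I$), and the block computations of $P^TBP$ do collapse exactly to the stated forms, including the $\sqrt{2}$ factors in the odd case. The paper itself gives no proof of this lemma---it is quoted directly from Cantoni and Butler---and your explicit conjugation by the standard symmetric/antisymmetric change of basis is essentially the argument in that reference, so there is nothing to reconcile.
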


Let $S_1=\sum_{r=1}^n(-1)^{r+n}\alpha_r$ and $S_2=\sum_{r=1}^n(-1)^{r+n}\alpha_r^2$.

\begin{corollary}\label{cor:middle} Let $A$ be the Hamiltonian  of a weighted mirror-symmetric path governed by XX dynamics, with or without loops, on $n$ vertices.
If $n$ is even then $\displaystyle r_{\frac{n}2}=\frac{S_1}2$ and $\displaystyle q_{\frac{n}2}=\frac{S_2}{2S_1}$. If $n$ is odd then  $\displaystyle r_{\frac{n-1}2}=\frac{\sqrt{S_2-S_1^2}}2$ and $q_{\frac{n+1}2}=S_1$.
\end{corollary}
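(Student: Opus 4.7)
The plan is to apply Lemma \ref{lem:CB} (Cantoni--Butler) to the persymmetric tridiagonal $A$ and then read off $r_{\lfloor n/2\rfloor}$ and $q_{\lceil n/2\rceil}$ from the traces and traces of squares of the resulting blocks.

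First, I would use the tridiagonal structure to make the blocks in Lemma \ref{lem:CB} explicit. For $n=2m$ even, the bottom-left block $C$ has only one nonzero entry, $C_{1,m}=r_{n/2}$, so $RC$ is zero except at position $(m,m)$ where it equals $r_{n/2}$; hence $E\pm RC$ agrees with the Jacobi submatrix $E$ except that its $(m,m)$-entry becomes $q_{n/2}\pm r_{n/2}$. For $n=2m+1$ odd, tridiagonality forces $C=0$ entirely, while the vector $x$ has only its last component nonzero, equal to $r_{(n-1)/2}$; consequently the larger block appearing in Lemma \ref{lem:CB} takes the bordered form
\[
M=\begin{pmatrix}q_{(n+1)/2} & \sqrt{2}\,x^{T} \\ \sqrt{2}\,x & E\end{pmatrix}.
\]

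Next, I would identify which eigenvalues of $A$ belong to which block. Since $A$ is a persymmetric Jacobi matrix, each eigenvector is either symmetric or antisymmetric under the reversal $R$; the symmetric ones give rise to eigenvalues of $E+RC$ (respectively, of $M$ in the odd case) and the antisymmetric ones to eigenvalues of $E-RC$ (respectively, of $E$). By the Sturm oscillation property, the $r$-th eigenvector in the ordering (\ref{eq:ordalpha}) has exactly $r-1$ sign changes; since a symmetric (resp.\ antisymmetric) vector has an even (resp.\ odd) number of sign changes, the two types strictly alternate along the spectrum, with $\alpha_n$ symmetric. A short bookkeeping check then shows that $S_1$ is exactly the symmetric trace minus the antisymmetric trace, and $S_2$ the corresponding difference of traces of squares.

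It then remains to compute these trace differences from the block structure. For $n$ even,
\[
\tr(E+RC)-\tr(E-RC)=2\,\tr(RC)=2r_{n/2}
\]
and
\[
\tr\bigl((E+RC)^{2}\bigr)-\tr\bigl((E-RC)^{2}\bigr)=4\,\tr(E\cdot RC)=4q_{n/2}r_{n/2},
\]
which yield $r_{n/2}=S_1/2$ and $q_{n/2}=S_2/(2S_1)$. For $n$ odd, a direct computation from the form of $M$ gives $\tr(M)-\tr(E)=q_{(n+1)/2}$ and $\tr(M^{2})-\tr(E^{2})=q_{(n+1)/2}^{2}+4\,x^{T}x=q_{(n+1)/2}^{2}+4r_{(n-1)/2}^{2}$, whence $q_{(n+1)/2}=S_1$ and $r_{(n-1)/2}=\tfrac12\sqrt{S_2-S_1^{2}}$.

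The only non-routine ingredient is the strict alternation of symmetric and antisymmetric eigenvectors along the ordered spectrum, which I would justify via the Sturm sign-change count for Jacobi matrices (equivalently, via strict Cauchy interlacing, which holds because all subdiagonals of $A$ are positive). Everything else reduces to straightforward block arithmetic.
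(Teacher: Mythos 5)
Your proof is correct and follows the same skeleton as the paper's: apply Lemma \ref{lem:CB}, decide which $\alpha_r$ belong to which diagonal block, and then extract $r_{\lfloor n/2\rfloor}$ and $q_{\lceil n/2\rceil}$ from the differences of traces and of traces of squares. Your block arithmetic ($2\tr(RC)$, $4\tr(E\cdot RC)=4q_{n/2}r_{n/2}$, and $\tr(M^2)-\tr(E^2)=q_{(n+1)/2}^2+4r_{(n-1)/2}^2$ in the odd case) all checks out and matches the paper's identities $(q+r)^2-(q-r)^2=4qr$ and $q^2+4r^2$. The one place you genuinely diverge is the eigenvalue assignment. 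The paper gets it from interlacing alone: for $n$ even, $B_2=B_1+2r_{n/2}e_{n/2}e_{n/2}^T$ is a positive rank-one perturbation of $B_1$, and for $n$ odd, $B_2$ is a bordering of $B_1$, so Cauchy interlacing together with the distinctness of the $\alpha_r$ forces $B_1$ and $B_2$ to pick up the odd- and even-indexed eigenvalues respectively. You instead invoke the Sturm oscillation count combined with the symmetric/antisymmetric dichotomy of eigenvectors of a persymmetric Jacobi matrix. That route works, but it is heavier machinery and, as stated, contains a small internal inconsistency: for a Jacobi matrix with \emph{positive} off-diagonals the eigenvector of the $r$-th \emph{smallest} eigenvalue in the ordering (\ref{eq:ordalpha}) has $n-r$ sign changes, not $r-1$; the Perron (positive) eigenvector belongs to $\alpha_n$. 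Under your stated convention, $\alpha_n$ would have $n-1$ sign changes and hence be antisymmetric when $n$ is even, which would flip the block assignment and yield $r_{n/2}=-S_1/2<0$. The anchoring you actually use ($\alpha_n$ symmetric, since its eigenvector is positive and therefore fixed by $R$) is the correct one, so the conclusion stands; just fix the indexing, or sidestep oscillation theory entirely by using the interlacing argument, which is all that is needed here.
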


\begin{proof}
We use the notation of Lemma \ref{lem:CB}. Suppose $n$ is even. Our Hamiltonian
\begin{eqnarray}
\begin{bmatrix}
q_1&r_1&&&&&&&&\\
r_1&q_2&r_2&&&&&&&\\
&r_2&q_3&r_3&&&&&&\\
&&&\ddots&&&&&&\\
&&&r_{\frac{n}2-1}&q_{\frac{n}2}&r_{\frac{n}2}&&&&\\
&&&&r_{\frac{n}2}&q_{\frac{n}2}&r_{\frac{n}2-1}&&&\\
&&&&&&\ddots&&&\\
\\
&&&&&&&r_2&q_2&r_1\\
&&&&&&&&r_1&q_1
\end{bmatrix}
\end{eqnarray}
is thus orthogonally similar to a 2-by-2 block diagonal matrix with diagonal blocks $$B_1=\begin{bmatrix}
q_1&r_1&&&&\\
r_1&q_2&r_2&&&\\
&r_2&q_3&r_3&&\\
&&&\ddots&&\\
&&&r_{\frac{n}2-1}&(q_{\frac{n}2}-r_{\frac{n}2})
\end{bmatrix} \textrm{ and } B_2=\begin{bmatrix}
q_1&r_1&&&&\\
r_1&q_2&r_2&&&\\
&r_2&q_3&r_3&&\\
&&&\ddots&&\\
&&&r_{\frac{n}2-1}&(q_{\frac{n}2}+r_{\frac{n}2})
\end{bmatrix}.$$ Here, $C$ has $r_{\frac{n}2}$ in its $(1,n/2)$ entry and zeros everywhere else.
Note that $B_2=B_1+2r_{\frac{n}2}e_{\frac{n}2}e_{\frac{n}2}^T$. It is a well-known fact that if one perturbs a Hermitian matrix by a rank-one symmetric matrix, the original matrix and the perturbed matrix will have interlacing eigenvalues. Since $r_{\frac n 2}$ is positive, %(as in the case of $H$ being the adjacency matrix),
it follows that $B_1$ has eigenvalues $\alpha_1, \alpha_3, \dots, \alpha_{n-1}$ and $B_2$ has eigenvalues $\alpha_2, \alpha_4, \dots, \alpha_n$. %If $r_2$ is negative (as in the case of $H$ being the Laplacian), then $B_1$ has eigenvalues $\alpha_2, \alpha_4, \dots, \alpha_{n}$ and $B_2$ has eigenvalues $\alpha_1, \alpha_3, \dots, \alpha_{n-1}$. However, the analysis is nearly identical.

From the fact that the trace of a matrix is the sum of its eigenvalues, we find  $2r_{\frac{n}2}=\tr(B_2)-\tr(B_1)=S_1$ and therefore $\displaystyle r_{\frac{n}2}=\frac{S_1}2$. Now, from the fact that the trace of the square of a matrix is the sum of the squares of the eigenvalues of the original matrix, we find
\begin{eqnarray}
(q_{\frac{n}2}+r_{\frac{n}2})^2-(q_{\frac{n}2}-r_{\frac{n}2})^2&=&\tr(B_2^2)-\tr(B_1^2)=S_2\\
\Rightarrow 4q_{\frac{n}2}r_{\frac{n}2}&=&S_2\\
\Rightarrow q_{\frac{n}2}&=&\frac{S_2}{2S_1}.
\end{eqnarray}

Suppose $n$ is odd. Our Hamiltonian
\begin{eqnarray}
\begin{bmatrix}
q_1&r_1&&&&&&&&&\\
r_1&q_2&r_2&&&&&&&&\\
&r_2&q_3&r_3&&&&&&&\\
&&&\ddots&&&&&&&\\
&&&r_{\frac{n-1}2-1}&q_{\frac{n-1}2}&r_{\frac{n-1}2}&&&&&\\
&&&&r_{\frac{n-1}2}&q_{\frac{n+1}2}&r_{\frac{n-1}2}&&&&\\
&&&&&r_{\frac{n-1}2}&q_{\frac{n-1}2}&r_{\frac{n-1}2-1}&&&\\
&&&&&&&\ddots&&&\\
\\
&&&&&&&&r_2&q_2&r_1\\
&&&&&&&&&r_1&q_1
\end{bmatrix}
\end{eqnarray}
is orthogonally similar to a block diagonal matrix with diagonal blocks \\
$B_1=\begin{bmatrix}
q_1&r_1&&&&\\
r_1&q_2&r_2&&&\\
&r_2&q_3&r_3&&\\
&&&\ddots&&\\
&&&r_{\frac{n-1}2-1}&q_{\frac{n-1}2}
\end{bmatrix}$ and  $B_2=\begin{bmatrix}
q_{\frac{n+1}2}&0&\cdots&&&\sqrt2r_{\frac{n-1}2}\\
0&&&&&\\
\vdots&&&&&\\
&&&B_1&&\\
\sqrt2r_{\frac{n-1}2}&&&&
\end{bmatrix}$. Here, $C$ is the zero matrix.
From Cauchy's interlacing theorem for a bordered Hermitian matrix,
%The same interlacing eigenvalue argument applies, with $B_1$ having
we know the eigenvalues of $B_1$ are $\alpha_2, \alpha_4, \dots, \alpha_{n-1}$, and the eigenvalues of $B_2$ are $\alpha_1, \alpha_3, \dots, \alpha_n$. A trace argument similar to the even case yields $q_{\frac{n+1}2}=\tr(B_2)-\tr(B_1)=S_1$ and $q_{\frac{n+1}2}^2+4r_{\frac{n-1}2}^2=\tr(B_2^2)-\tr(B_1^2)=S_2\Rightarrow \displaystyle r_{\frac{n-1}2}=\frac{\sqrt{S_2-S_1^2}}2$.
\end{proof}

\begin{remark}\label{Lapre}
For a weighted persymmetric path on $n$ vertices governed by XXX dynamics we have a similar result for the Hamiltonian $L$:
if $n$ is even, then $\displaystyle r_{\frac{n}2}=\frac{S_1}2$ and $\displaystyle q_{\frac{n}2}=\frac{S_2}{2S_1}$; if $n$ is odd, then  $\displaystyle r_{\frac{n-1}2}=\frac{\sqrt{S_2-S_1^2}}2$ and $q_{\frac{n+1}2}=S_1$. Furthermore, for the even case, from $q_{\frac n 2}=r_{\frac n 2-1}+r_{\frac n 2}$, we have $r_{\frac n 2-1}=\frac{S_2-S_1^2}{2S_1}$. While for the odd case, from $q_{\frac{n+1}{2}}=2r_{\frac{n-1}{2}}$, we have $S_2=2S_1^2$.

\end{remark}

\begin{example}[$2\times 2$ and $3\times 3$ Cases]

For $n=2$,  Corollary \ref{cor:middle} yields a weighted path with potentials having $\displaystyle r_1=\frac{\alpha_2-\alpha_1}2$ and $\displaystyle q_1=\frac{\alpha_2+\alpha_1}2$. %The Hamiltonian is then $\frac12\begin{bmatrix}
%\alpha_2+\alpha_1&\alpha_2-\alpha_1\\\alpha_2-\alpha_1&\alpha_2+\alpha_1
%\end{bmatrix}$.
If we consider a weighted path with no potentials (and so $\alpha_1, \alpha_2$ are simply $-\beta_1$, $\beta_1$),  the Hamiltonian $A$  reduces to $\begin{bmatrix}
0&\beta_1\\
\beta_1&0
\end{bmatrix}$, which shows that the unweighted path on $2$ vertices, given by the adjacency matrix $\begin{bmatrix}
0&1\\
1&0
\end{bmatrix}$ has PST from vertex 1 to vertex 2.
Similarly, for the Laplacian case, since $\alpha_1=0$, the Hamiltonian $L$ reduces to
$\frac12\begin{bmatrix}
\alpha_2&-\alpha_2\\-\alpha_2&\alpha_2
\end{bmatrix}$.

For $n=3$, Corollary \ref{cor:middle} yields a weighted path with potentials governed by the XX dynamics having $\displaystyle r_1=\frac{\sqrt{-2\alpha_2^2+2\alpha_1\alpha_2-
2\alpha_1\alpha_3+2\alpha_2\alpha_3}}2$ and $\displaystyle q_2=\alpha_1-\alpha_2+\alpha_3$. Here $B_1$ is simply the $1\times 1$ matrix $(q_1)$, and thus we can find $q_1$ via $q_1=\tr(B_1)=\alpha_2$.

Under XX dynamics, the Hamiltonian $A$ is
\begin{eqnarray*}
A=\begin{bmatrix}
\alpha_2&\frac{\sqrt{-2\alpha_2^2+2\alpha_1\alpha_2-
2\alpha_1\alpha_3+2\alpha_2\alpha_3}}2&0\\\frac{\sqrt{-2\alpha_2^2+2\alpha_1\alpha_2-
2\alpha_1\alpha_3+2\alpha_2\alpha_3}}2&\alpha_1-\alpha_2+\alpha_3&\frac{\sqrt{-2\alpha_2^2+2\alpha_1\alpha_2-
2\alpha_1\alpha_3+2\alpha_2\alpha_3}}2\\
0&\frac{\sqrt{-2\alpha_2^2+2\alpha_1\alpha_2-
2\alpha_1\alpha_3+2\alpha_2\alpha_3}}2&\alpha_2
\end{bmatrix}.
\end{eqnarray*} If we consider a weighted path with no potentials (and so $\alpha_1$, $\alpha_2$, $\alpha_3$ are simply $-\beta_1$, $0$, $\beta_1$),  the adjacency matrix reduces to
\begin{eqnarray}\label{eq:H3}
A=\begin{bmatrix}
0&\frac{\beta_1}{\sqrt{2}}&0\\
\frac{\beta_1}{\sqrt{2}}&0&\frac{\beta_1}{\sqrt{2}}\\
0&\frac{\beta_1}{\sqrt{2}}&0\\
\end{bmatrix}.
\end{eqnarray}

Our $3\times 3$ example is consistent with the literature: the unweighted path on 3 vertices given by the adjacency matrix $A=\begin{bmatrix}
0&1&0\\1&0&1\\0&1&0
\end{bmatrix}
$ has PST from vertex 1 to vertex 3, and from the above we see that the weighted case is simply a scalar multiple ($\beta_1/\sqrt{2}$) of the unweighted case.

Similarly for the XXX dynamics, from Remark \ref{Lapre} and the fact that $\alpha_1=0$, we know the Laplacian reduces to
\begin{eqnarray*}
L=\begin{bmatrix}
\alpha_2&-\frac{\sqrt{-2\alpha_2^2+2\alpha_2\alpha_3}}2&0\\-\frac{\sqrt{-2\alpha_2^2+2\alpha_2\alpha_3}}2&\alpha_3-\alpha_2&-\frac{\sqrt{-2\alpha_2^2+2\alpha_2\alpha_3}}2\\
0&-\frac{\sqrt{-2\alpha_2^2+2\alpha_2\alpha_3}}2&\alpha_2
\end{bmatrix}.
\end{eqnarray*}
\end{example}

\section{PST between end points of a path fails for the Laplacian}\label{sec:thmL}

Now we show that under XXX dynamics there is no weighted path on at least 3 vertices that admits PST between the end points.
As discussed  in Remark \ref{rmk:multpi}, if there is PST at time $\pi$ between the end vertices, the eigenvalues of the Laplacian are integers and alternate between even and odd (starting at even).
\begin{theorem}\label{thm:L}
No weighted (or unweighted) path on $n\geq 3$ vertices admits Laplacian PST between the end  points.
\end{theorem}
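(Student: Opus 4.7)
Plan: I would argue by contradiction. Suppose such a weighted path Laplacian $L$ on $n\ge 3$ vertices admits PST between the end vertices; after the rescaling described in the paper I may take the PST time to be $\pi$, so by Remark~\ref{rmk:multpi} and Kay's persymmetry result, $L$ is persymmetric ($r_j=r_{n-j}$) and its eigenvalues $0=\alpha_1<\alpha_2<\cdots<\alpha_n$ are integers of alternating parity beginning with the even value $\alpha_1=0$ (so $\alpha_r$ is even exactly when $r$ is odd). Apply the Cantoni--Butler decomposition (Lemma~\ref{lem:CB}) exactly as in the proof of Corollary~\ref{cor:middle}, splitting the spectrum between two blocks $B_1$ and $B_2$. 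The key computational input is that the leading principal minors of $L$ satisfy the recursion $D_k=q_kD_{k-1}-r_{k-1}^2D_{k-2}$ which, combined with the Laplacian convention $q_j=r_{j-1}+r_j$, gives $D_k=r_1r_2\cdots r_k$ by a one-line induction.

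For odd $n$, this identifies $\det B_1=r_1\cdots r_{(n-1)/2}=\prod_{r\text{ even}}\alpha_r$, while the weighted matrix-tree theorem applied to $L$ together with persymmetry gives $\prod_{r=2}^n\alpha_r=n(r_1\cdots r_{(n-1)/2})^2$; dividing through collapses this to
\[
\prod_{r\text{ odd},\,r\ge 3}\alpha_r \;=\; n\prod_{r\text{ even}}\alpha_r.
\]
Under the PST parity assignment the left side is a product of $(n-1)/2\ge 1$ positive even integers (hence even), while the right side is $n$ (odd) times a product of odd positive integers (hence odd), a contradiction.

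For even $n$, the block $B_1$ is itself the Laplacian of the $(n/2)$-vertex subpath with weights $r_1,\ldots,r_{n/2-1}$ (since $q_{n/2}-r_{n/2}=r_{n/2-1}$ restores the correct endpoint degree), while $B_2=B_1+2r_{n/2}\,e_{n/2}e_{n/2}^T$; a tridiagonal cofactor expansion gives $\det B_2=2r_1\cdots r_{n/2}$, and matrix-tree applied to $B_1$ gives $\prod_{r\text{ odd},\,r\ge 3}\alpha_r=(n/2)r_1\cdots r_{n/2-1}$. Eliminating the $r_j$'s and substituting $r_{n/2}=S_1/2$ from Remark~\ref{Lapre} produces
\[
n\prod_{r\text{ even}}\alpha_r \;=\; 2S_1\prod_{r\text{ odd},\,r\ge 3}\alpha_r.
\]
A $2$-adic valuation argument closes this case: the left side has $v_2=v_2(n)$ since each even-indexed $\alpha_r$ is odd, whereas each of the $n/2-1$ odd-indexed factors on the right is a positive even integer, and a direct parity count of the alternating sum shows $S_1\equiv n/2\pmod 2$, contributing an extra factor of $2$ when $4\mid n$. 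Combining these observations with the elementary bound $v_2(n)\le n/2$ (equality only at $n\in\{2,4\}$) forces $v_2(\text{LHS})<v_2(\text{RHS})$ for every even $n\ge 4$. The principal obstacle is this $2$-adic bookkeeping, particularly the borderline case $n=4$, where one genuinely needs the parity of $S_1$ to supply the extra factor of $2$ that tips the balance; the odd case, by contrast, admits the transparent ``even vs.\ odd'' obstruction displayed above.
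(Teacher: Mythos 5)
Your proof is correct, and its skeleton is the same as the paper's: Cantoni--Butler decomposition into $B_1$ and $B_2$, determinant/matrix-tree identities converting products of edge weights into products of eigenvalues of known parity, and a divisibility contradiction. Two of your choices differ in a way worth noting. In the odd case you obtain the identity $\prod_{r\,\mathrm{odd},\,r\ge 3}\alpha_r=n\prod_{r\,\mathrm{even}}\alpha_r$ by applying the matrix-tree theorem globally to $L$ and using persymmetry to write $r_1\cdots r_{n-1}=(r_1\cdots r_{(n-1)/2})^2$, rather than summing the order-$\frac{n-1}{2}$ principal minors of $B_2$ as the paper does; both are valid and land on the same even-versus-odd contradiction. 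More substantively, in the even case the paper's argument ($2^{n/2-1}$ divides $\frac{n}{2}$) only works for $n\ge 6$ and requires a separate trace/determinant computation with a mod-$4$ contradiction to dispose of $n=4$, whereas your $2$-adic valuation of $n\prod_{r\,\mathrm{even}}\alpha_r=2S_1\prod_{r\,\mathrm{odd},\,r\ge 3}\alpha_r$, sharpened by the observation $S_1\equiv \frac{n}{2}\pmod 2$ (so $v_2(S_1)\ge 1$ when $4\mid n$), gives $v_2(\mathrm{LHS})=v_2(n)\le \frac{n}{2}<v_2(\mathrm{RHS})$ uniformly for all even $n\ge 4$, including the borderline $n=4$. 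That unification is a genuine, if modest, streamlining of the published proof; the rest of your bookkeeping (the leading-minor recursion $D_k=r_1\cdots r_k$, the identification of which block carries the even-valued spectrum via rank-one interlacing, and $\det B_2=2r_1\cdots r_{n/2}$) all checks out.
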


\begin{proof}
We begin by assuming $n$ is even. The persymmetric Laplacian is of the form
\begin{eqnarray}
\begin{bmatrix} \nonumber
r_1&-r_1&&&&&&&&\\
-r_1&r_1+r_2&-r_2&&&&&&&\\
&-r_2&r_2+r_3&-r_3&&&&&&\\
&&&\ddots&&&&&&\\
&&&-r_{\frac{n}2-1}&r_{\frac{n}2-1}+r_{\frac{n}2}&-r_{\frac{n}2}&&&&\\
&&&&-r_{\frac{n}2}&r_{\frac{n}2-1}+r_{\frac{n}2}&-r_{\frac{n}2-1}&&&\\
&&&&&&\ddots&&&\\
\\
&&&&&&&-r_2&r_1+r_2&-r_1\\
&&&&&&&&-r_1&r_1
\end{bmatrix} \\
\end{eqnarray}
with $B_1$ and $B_2$ written according to Lemma \ref{lem:CB}. Note the eigenvalues of $B_1$ are $\alpha_2,\cdots,\alpha_n$, and the eigenvalues of $B_2$ are $\alpha_1,\cdots,\alpha_{n-1}$.

First, we compute the determinant of $B_1$. Observe that $B_1$ can be written as $\hat{B_1} + 2r_{\frac{n}{2}}e_{\frac{n}{2}}e_{\frac{n}{2}}^T,$ where
$\hat{B_1}$ is the Laplacian matrix for a weighted path on $\frac{n}{2}$ vertices with edge weights $r_j, j=1, \ldots, \frac{n}{2}-1.$ We deduce that
$\det B_1 = \det \hat{B_1} + 2r_{\frac{n}{2}} c,$ where $c$ is the determinant of the leading principal submatrix of $\hat{B_1}$ of order
$\frac{n}{2}-1$. Evidently $\det \hat{B_1}=0,$ and applying the weighted matrix tree theorem \cite[Theorem 1.2]{wmtt}, we find that $c=r_1r_2 \ldots r_{\frac{n}{2}-1}$; hence  $\det B_1=2r_1r_2\cdots r_{\frac{n}2}$.
%In order to compute the determinant of $B_1$ efficiently, we replace row 2 with the sum of row 1 and row 2; this gives zero for the $(2,1)$ entry and changes the $(2,2)$ entry to $r_2$. Next, we replace row 3 with the sum of the new row 2 and row 3; this gives zero for the $(3,2)$ entry and changes the $(3,3)$ entry to $r_3$. Continuing in this manner, we obtain an upper triangular matrix whose determinant, as computed by the product of its diagonal entries, gives us  $\det B_1=2r_1r_2\cdots r_{\frac{n}2}$.
Thus we have
\begin{equation}\label{eq:B2rsneven}
2r_1r_2\cdots r_{\frac{n}2}=\alpha_2\alpha_4\cdots\alpha_n.
\end{equation}

Note that in this setting $B_2$  is a also a Laplacian matrix, and so the weighted matrix tree theorem  tells us that
all its cofactors of order $\frac{n}2-1$ are equal. The $(1, \frac n 2)$ cofactor in this case is $(-1)^{\frac n 2-1}(-r_1)(-r_2)\cdots(-r_{\frac n 2-1})=r_1r_2\cdots r_{\frac n 2-1}$.
Since the sum of all principal minors of size $\frac n 2-1$ equals the $(\frac n 2-1)$-th elementary symmetric function of
$\alpha_1,\alpha_3,\cdots,\alpha_{n-1}$, that is $\sum_{j=0,\cdots,\frac n 2 -1}\prod_{k\neq j}\alpha_{2k+1}$, and using the fact $\alpha_1=0$ (so the only nonzero product in the summand is $\alpha_3\cdots\alpha_{n-1}$), we have

\begin{equation}\label{eq:B1rsneven}
{\frac{n}2}r_1r_2\cdots r_{\frac{n}2-1}=\alpha_3\alpha_5\cdots\alpha_{n-1}.
\end{equation}

Combining equations (\ref{eq:B1rsneven}) and (\ref{eq:B2rsneven}), we find that
\begin{eqnarray}\label{eq:rforneven}
%\frac{2r_{\frac{n}2}}{\frac n 2}&=&\frac{\alpha_2\alpha_4\cdots\alpha_n}{r_1r_2\cdots r_{\frac{n}2-1}}\textnormal{ (by equation (\ref{eq:B1rsneven}))}\\
2r_{\frac n 2}&=&\frac{{\frac{n}2}\alpha_2\alpha_4\cdots\alpha_n}{\alpha_3\alpha_5\cdots\alpha_{n-1}}.%\textnormal{ (by equation (\ref{eq:B2rsneven}))}
\end{eqnarray}
Now, $2r_{\frac{n}2}=S_1\in \Z$ by Corollary \ref{cor:middle}, and the numerator of the right hand side of equation (\ref{eq:rforneven}) is ${\frac{n}2}$ times all the odd eigenvalues while the denominator is the product of all the even eigenvalues. Thus we obtain a factor of $2^{{\frac{n}2}-1}$ in the denominator, from which it follows that $2^{{\frac{n}2}-1}$ divides ${\frac{n}2}$, which is a contradiction provided ${\frac{n}2}\geq 3$, i.e.\ provided $n\geq 6$.

For $n=4$, again from Corollary \ref{cor:middle},
we  find $\displaystyle r_2=\frac{S_1}2$ and  $\displaystyle q_2=\frac{S_2}{2S_1}$.
Substituting $q_2$ into the trace equations
\begin{eqnarray}\label{eq:trace}
\tr(B_1)=q_1+q_2+r_2&=&\alpha_2+\alpha_4 \label{eq:trace1}\\
\tr(B_2)=q_1+q_2-r_2&=&\alpha_1+\alpha_3\label{eq:trace2}
\end{eqnarray}
(with $q_2=r_1+r_2$) and then adding them gives $\displaystyle q_1=\frac{\alpha_2\alpha_4-\alpha_1\alpha_3}{S_1}$.
Now, substituting $q_1$ and $q_2$ into the determinant equations
\begin{eqnarray}
\det(B_1)=q_1q_2+q_1r_2-r_1^2&=&\alpha_2\alpha_4\label{eq:det1}\\
\det(B_2)=q_1q_2-q_1r_2-r_1^2&=&\alpha_1\alpha_3\label{eq:det2}
\end{eqnarray}
(with $q_1=r_1$) and then adding them gives
\begin{equation}
r_1=\sqrt{\frac{(\alpha_2\alpha_4-\alpha_1\alpha_3)S_2-(\alpha_2\alpha_4+\alpha_1\alpha_3)S_1^2}{2S_1^2}}.
\end{equation}
With the fact $\alpha_1=0$, we have $r_1=\sqrt{\frac{\alpha_2\alpha_4(S_2-S_1^2)}{2S_1^2}}$ and $q_1=\frac{\alpha_2\alpha_4}{S_1}$. Since $r_1=q_1$, we have $\alpha_2\alpha_4=\frac{S_2-S_1^2}{2}$, which tells us $2\alpha_2\alpha_4=\alpha_4\alpha_3-\alpha_3^2+\alpha_3\alpha_2$, with $\alpha_2$ and $\alpha_4$ being odd integers, and $\alpha_3$ an even integer.
%Assume $\alpha_2=2a+1$, $\alpha_4=2b+1$ and $\alpha_3=2c$ where $a,b,c \in \Z$. The above equation is then equivalent to $4ab+2a+2b+1=2bc+2c+2ac-2c^2$; but, there are no such integers, and so we have a contradiction for $n=4$.
Note that the left hand side of this equation is congruent to $2\modn 4$ while the right hand side is congruent to $0\modn 4$, and so we have a contradiction for $n=4$.
This completes the $n$ even case.

\iffalse
If $n=4$ Example \ref{ex:4withp} with $\alpha_1=0$ gives $\displaystyle q_1=\frac{\alpha_2\alpha_4}{S_1}$ and $\displaystyle r_1=\sqrt{\frac{\alpha_2\alpha_4(S_2-S_1^2)}{2S_1^2}}$. Using Corollary \ref{cor:middle} with $q_1=r_1$ %(which implies $q_1^2=r_1^2$,
we have
\begin{equation}\label{eq:r1forn4}
\alpha_2\alpha_4=\frac{S_2-S_1^2}2
\end{equation}
while $q_2=r_1+r_2\Rightarrow q_2+r_2=-r_1$, together with equation (\ref{eq:r1forn4}) gives
\begin{equation}\label{eq:r2forn4}
\alpha_2\alpha_4=\frac{S_2+S_1^2}2.
\end{equation}
Adding equations (\ref{eq:r1forn4}) and (\ref{eq:r2forn4}) yields $2\alpha_2\alpha_4=S_2$ while subtracting yields
\begin{eqnarray*}
0&=&S_1^2=S_2-2(\alpha_2\alpha_3+\alpha_3\alpha_4-\alpha_2\alpha_4+\alpha_3^2)\\
&=&2\alpha_2\alpha_4-2(\alpha_2\alpha_3+\alpha_3\alpha_-\alpha_2\alpha_4+\alpha_3^2)\\
&\Rightarrow&\alpha_2\alpha_3+\alpha_3\alpha_4=\alpha_3^2+\alpha_2\alpha_4.
\end{eqnarray*}
The left hand side of the last equation above is even, while the right hand side is odd. Thus no such Laplacian exists.
%
\fi

We now assume
 $n$ is odd. Our Hamiltonian is
 \begin{eqnarray}\nonumber
\begin{bmatrix}
r_1&-r_1&&&&&\\
-r_1&r_1+r_2&-r_2&&&&\\
%&-r_2&r_2+r_3&-r_3&&&&&&&\\
&&\ddots&&&&&\\
&&-r_{\frac{n-1}2-1}&r_{\frac{n-1}2-1}+r_{\frac{n-1}2}&-r_{\frac{n-1}2}&&&\\
&&&-r_{\frac{n-1}2}&2r_{\frac{n-1}2}&-r_{\frac{n-1}2}&&&\\
&&&&-r_{\frac{n-1}2}&r_{\frac{n-1}2-1}+r_{\frac{n-1}2}&-r_{\frac{n-1}2-1}&\\
&&&&&\ddots&&\\
\\
&&&&&-r_2&r_1+r_2&-r_1\\
&&&&&&-r_1&r_1
\end{bmatrix}, \\
\end{eqnarray}
and again we take $B_1$ and $B_2$  as in Lemma \ref{lem:CB}.
 The eigenvalues of $B_1$ and $B_2$ interlace, with $\det B_1$ yielding $r_1r_2\cdots r_{\frac{n-1}2}=\alpha_2\alpha_4\cdots\alpha_{n-1}$ (the calculation is similar  to the $n$ even case), where $\alpha_2, \alpha_4, \dots, \alpha_{n-1}$ are the odd eigenvalues. For $B_2$,  its $(1,1)$ minor is just $\det(B_1)=r_1r_2\cdots r_{\frac{n-1}{2}}$. %For each of the other principal minors of size $(n-1)/2$ of $B_2$, we first multiply the first row by $1/\sqrt{2}$ and then add it to the last row, followed by similar elementary row operations as we did for $B_1$ (from the first and last row to the middle, instead of, from the first row to the last row).
Now we calculate the other principal minors of size $(n-1)/2$ of $B_2$. Fix such a minor. If we take the factor $\sqrt{2}$ from the first row and the first column, then the principal minor that we seek is twice the principal minor of size $\frac{n-1}2$ of a Laplacian matrix; by the  weighted matrix tree theorem,
that minor is
 equal to the (1,1) minor of the Laplacian, which is $\det(B_1)$. Therefore the corresponding principal minors of $B_2$ are given by $2 \det(B_1)=2r_1\cdots r_{\frac{n-1}2}$.
%We find that all $(n-1)/2$ of them are equal to $2r_1\cdots,r_{\frac{n-1}{2}}$.
Again, from the fact that the sum of all of $B_2$'s principal minors of size $(n-1)/2$ is equal to the $(\frac{n-1}{2})$-th elementary symmetric function of $\alpha_1,\alpha_3,\cdots,\alpha_{n+1}$, we find that $r_1\cdots r_{\frac{n-1}{2}}+\frac{n-1}{2} 2 r_1\cdots r_{\frac{n-1}{2}}=nr_1\cdots r_{\frac{n-1}2}=\alpha_3\alpha_5\cdots\alpha_n$. Combining this equation with the one for $B_1$, we have $n\alpha_2\alpha_4\cdots\alpha_{n-1}=\alpha_3\alpha_5\cdots\alpha_n$. This is a contradiction, since the left side of the equation is an odd number, while the right side is an even number.

This completes the $n$ odd case.
\end{proof}

We note that it was recently found \cite{trees} that there is no Laplacian PST for (unweighted) trees. Theorem \ref{thm:L} resolves the weighted generalization for the special case of paths. In fact, we can generalize the above theorem to any weighted  tree whose Laplacian matrix is persymmetric. Such a weighted tree can be represented schematically as follows (see Figure 1):

\begin{figure}[h]\label{Fig}
\begin{center}
\caption{Symmetric Trees}
 \begin{tikzpicture}[
        shorten >=1pt, auto, thick,
        node distance=3cm,
    main node/.style={circle,draw,font=\sffamily\Large\bfseries, every loop/.style={}}
                            ]
    \node[main node] (1) {$G$};
  %  \node[main node] (2) [right of= 1]{$v_1$};
  %  \node[main node] (3) [right of= 2]{$v_2$};
    \node[main node] (2) [right of= 1]{$\tilde{G}$};
    \path[every node/.style={font=\sffamily\small}]
        (1) edge  node {$w_1$} (2);
       % (2)   edge[loop, dashed] node  {$w_2$} (2)
       %  (3)   edge[loop, dashed] node  {$w_2$} (3)
        %    (2) edge  node {$w_2$} (3)
        % (3)   edge node {$w_1$} (4);
\end{tikzpicture}
\\
or
\\\vspace{0.2cm}

\begin{tikzpicture}[
        shorten >=1pt, auto, thick,
        node distance=3cm,
    main node/.style={circle,draw,font=\sffamily\Large\bfseries, every loop/.style={}}
                            ]
    \node[main node] (1) {$G$};
    \node[main node] (2) [right of= 1]{$v$};
    \node[main node] (3) [right of= 2]{$\tilde{G}$};
    \path[every node/.style={font=\sffamily\small}]
        (1) edge  node {$w_1$} (2)
         (2)   edge node {$w_1$} (3);
        % (2)   edge[loop, dashed] node  {$w_2$} (2);
\end{tikzpicture}
\\
\end{center}
\end{figure}

\noindent where $G$ is a weighted tree, $\tilde{G}$ is the mirror image of $G$, and $w_1$  is an edge weight. The first graph (with a weighted edge connecting a vertex in $G$ to its corresponding vertex in $\tilde{G}$) generalizes weighted paths with $n$ even, while the second graph (with one middle vertex $v$ connected to a vertex in $G$ and to the corresponding vertex in $\tilde{G}$) generalizes weighted paths with $n$ odd.

Lemma \ref{lem:CB} applies equally well in the situation of a weighted  tree whose Laplacian matrix is persymmetric. For such a weighted tree, the matrix $C$ is as in Corollary \ref{cor:middle}: it has one non-zero entry for $n$ even (so the two matrices $B_1$ and $B_2$ are still rank one Hermitian perturbation of each other) and it is the zero matrix for $n$ odd ($B_2$ is a bordered Hermitian matrix of $B_1$). Although the Hamiltonian (and thus $B_1$ and $B_2$) is more complicated than in  Corollary \ref{cor:middle}, the interlacing of the eigenvalues of $B_1$ and $B_2$ still holds, and the arguments using the weighted matrix tree theorem continue to apply.

Assume $n$ is even, now observe that if $v$ is an eigenvector of $B_1=E-RC$ associated to the eigenvalue $\lambda$, i.e., $(E-RC)v=\lambda v$, then from
$\begin{bmatrix}E&RCR\\C&RER\end{bmatrix}\begin{bmatrix}v\\-Rv\end{bmatrix}=\begin{bmatrix}Ev-RCv\\Cv-REv\end{bmatrix}=\lambda\begin{bmatrix}v\\-Rv\end{bmatrix}$, we know the antisymmetric vector $\begin{bmatrix}v\\-Rv\end{bmatrix}$ is an eigenvector of the Hamiltonian $L=\begin{bmatrix}E&RCR\\C&RER\end{bmatrix}$ associated to the eigenvalue $\lambda$. Similarly, if $u$ is the eigenvector of $B_2=E+RC$ associated to the eigenvalue $\mu$, then the symmetric vector $\begin{bmatrix}u\\Ru\end{bmatrix}$ is an eigenvector of $L$ associated to eigenvalue $\mu$.
Using all the $n/2$ orthogonal eigenvectors $v_j$ of $B_1$ and the $n/2$ orthogonal eigenvectors $u_j$ of $B_2$, we can form $n$ orthogonal eigenvectors of $L$: $\begin{bmatrix}v_j\\-Rv_j\end{bmatrix}$, $\begin{bmatrix}u_j\\Ru_j\end{bmatrix}$, $j=1,\ldots, n/2$. Normalize each of them and use them as columns to form a real orthogonal matrix $S$; assume it diagonalizes $L$ to $\Lambda$.
If there is PST between a vertex $j$ and its mirror image $n+1-j$, then $s_j e^{i\pi\Lambda}=e^{i\phi}s_{n+1-j}$, where $s_{\ell}$ is the $\ell$-th row of $S$ and $\phi$ is some real number.
If we assume $S$ does not have any zero entries, then
from the symmetric and antisymmetric structures of the eigenvectors, and the fact that  0 is an eigenvalue, we know the eigenvalues of $B_1$ are odd integers, and the eigenvalues of $B_2$ are even integers.
So the arguments in Theorem \ref{thm:L} applies if each of the two matrices $B_1$ and $B_2$ formed from the Laplacian of a persymmetric weighted tree can be diagonalized by some real orthogonal matrix which does not have zero entries.

If $n$ is odd, then the Laplacian of the tree is $L=\begin{bmatrix}
E&x&0\\x^T&2w_1&x^TR\\0&Rx&RER\end{bmatrix}$, where $x=\begin{bmatrix}0\cdots 0\, -w_1\end{bmatrix}^T\in\mathbb{R}^{\frac{n-1}2}$. As above, we can check if $v$  is an eigenvector of $B_1=E-RC=E$ associated to the eigenvalue $\lambda$, i.e., $Ev=\lambda v$, then $\begin{bmatrix}v\\0\\-Rv\end{bmatrix}$ is an eigenvector of $L$ associated to the eigenvalue $\lambda$.
And if $u=\begin{bmatrix}a\\ \tilde{u}\end{bmatrix}$ is an eigenvector of $B_2=\begin{bmatrix}2w_1&\sqrt{2}x^T\\\sqrt{2}x&E\end{bmatrix}$ associated to the eigenvalue $\mu$, then $\begin{bmatrix}\tilde{u}\\ \sqrt{2}a\\R\tilde{u}\end{bmatrix}$ is an eigenvector of $L$ associated to the eigenvalue $\mu$. Using the $\frac{n-1}2$ eigenvectors $v_j$ of $B_1=E$ and the $\frac{n+1}2$ eigenvectors $u_j$ of $B_2$, we form $n$ orthogonal eignvectors of $L$: $\begin{bmatrix}v_j\\0\\-Rv_j\end{bmatrix}$, $\begin{bmatrix}\tilde{u}_j\\\sqrt{2}a\\R\tilde{u}_j\end{bmatrix}$.
Now assume that $B_1$ can be diagonalized by a real orthogonal matrix which does not have zero entries,  and that $B_2$ can be diagonalized by a real orthogonal matrix which does not have zero entries apart from the first row. With a similar argument as in the $n$  even case, we can see if  there is PST between a vertex $j<\frac{n+1}2$ and its mirror image $n+1-j$, then the eigenvalues of $B_1=E$ are odd integers, and the eigenvalues of $B_2$ are even integers. Therefore the arguments in Theorem \ref{thm:L} applies here.

%Lemma \ref{lem:CB} applies equally well in the situation of a weighted  tree whose Laplacian matrix is persymmetric, provided that the Laplacian eigenvectors have no zero entries and the Laplacian eigenvalues are not repeated. For such a weighted tree, the matrix $C$ is as in Corollary \ref{cor:middle}: it has one non-zero entry for $n$ even and it is the zero matrix for $n$ odd. Although the Hamiltonian (and thus $B_1$ and $B_2$) are more complicated than in  Corollary \ref{cor:middle}, the middle two rows (for $n$ even) and middle row (for $n$ odd) are the same as in Corollary \ref{cor:middle}, the interlacing of eigenvalues still holds, and the arguments using the weighted matrix tree theorem continue to apply. For PST to occur, Proposition \ref{prop:internal} requires all nonzero entries in the eigenvectors and requires alternating sign pattern, which necessitates the assumption that the eigenvalues are not repeated.

We summarize the above arguments in the following Theorem:

\begin{theorem} Let $G$ be a symmetric tree as in Figure 1 (namely, a weighted tree whose Laplacian matrix is persymmetric). Suppose the two matrices $B_1$ and $B_2$ we obtain from the Laplacian $L$   can be diagonalized by some real symmetric matrices $Q_1$ and $Q_2$, respectively, such that $Q_1$ contains no zero entries, and $Q_2$ contains no zero entries if $n$ is even and contains no zero entries apart from the first row if $n$ is odd. Then $G$ does not admit Laplacian PST between vertex $\displaystyle j<\frac{n+1}2$ and its mirror image $n+1-j$.
\end{theorem}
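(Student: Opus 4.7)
The plan is to follow precisely the outline sketched in the paragraphs immediately preceding the theorem statement and to reduce the symmetric-tree case to the counting arguments already carried out for paths in Theorem~\ref{thm:L}. The first step is to invoke Lemma~\ref{lem:CB} to write the persymmetric Laplacian $L$ as orthogonally similar to a block diagonal matrix with blocks $B_1 = E-RC$ and $B_2=E+RC$ (when $n$ is even), or with blocks $B_1=E$ and $B_2=\begin{bmatrix}2w_1 & \sqrt{2}x^T\\ \sqrt{2}x & E\end{bmatrix}$ (when $n$ is odd). The key observation, already verified in the discussion, is that each eigenvector of $B_1$ lifts to an antisymmetric eigenvector of $L$ of the form $\begin{bmatrix}v\\-Rv\end{bmatrix}$ (with a zero middle coordinate if $n$ is odd), and each eigenvector of $B_2$ lifts to a symmetric eigenvector of $L$ of the form $\begin{bmatrix}u\\Ru\end{bmatrix}$ (respectively $\begin{bmatrix}\tilde u\\\sqrt{2}a\\R\tilde u\end{bmatrix}$). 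Collecting these lifts and normalizing yields a real orthogonal matrix $S$ diagonalizing $L$; the hypothesis on $Q_1$ and $Q_2$ is exactly what guarantees that $S$ has the ``no-zero-entry'' structure needed to extract parity information from the PST condition.

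Next I would translate the PST hypothesis. If PST occurs between vertex $j$ and its mirror image $n+1-j$ at time $\pi$, then with rows $s_\ell$ of $S$ we have $s_j e^{i\pi \Lambda} = e^{i\phi} s_{n+1-j}$. For each antisymmetric eigenvector the entries in positions $j$ and $n+1-j$ differ by a sign, while for each symmetric eigenvector they agree; so the ratio $(s_{n+1-j})_r/(s_j)_r$ equals $+1$ on eigenvalues coming from $B_2$ and $-1$ on those coming from $B_1$. Combining with $e^{i\pi\alpha_r}(s_j)_r = e^{i\phi}(s_{n+1-j})_r$ and using the no-zero-entry hypothesis to cancel, one concludes (after absorbing $\phi$ via the presence of the eigenvalue $0$, which lifts to the symmetric all-ones eigenvector and hence lies among the $B_2$-eigenvalues) that the eigenvalues of $B_2$ are even integers and those of $B_1$ are odd integers.

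At this point the problem becomes a pure counting problem on $\det B_1$ and the cofactors of $B_2$, identical in form to the path case. For $n$ even, I would compute $\det B_1$ by writing $B_1 = \hat B_1 + 2r_{n/2}e_{n/2}e_{n/2}^T$ where $\hat B_1$ is still a weighted-tree Laplacian (of the subtree $G$), apply the weighted matrix tree theorem of \cite{wmtt} to express $\det B_1$ as twice a product of edge weights, and then do the analogous computation for $B_2$, whose $(1,n/2)$ cofactor equals the product of the edge weights of $G$ up to a sign. Equating the elementary symmetric function of the eigenvalues with the cofactor sum, and using $\alpha_1=0$ to eliminate terms, will produce the same divisibility equation $2^{n/2-1} \mid (n/2)$ that drove the contradiction in Theorem~\ref{thm:L}; the small cases are handled by the same $n=4$ argument. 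For $n$ odd, $B_1$ itself is a tree Laplacian, and the $(1,1)$-minor equals $\det B_1$ while each other size-$(n-1)/2$ principal minor of $B_2$ becomes twice a cofactor of a Laplacian, again by a $\sqrt{2}$ factorization, so the sum of minors gives $n\cdot (\text{product}) = \alpha_3\alpha_5\cdots\alpha_n$, forcing an even integer on the right and an odd integer on the left.

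The main obstacle I anticipate is verifying that the weighted matrix tree theorem computations go through cleanly in the tree setting: specifically, for the symmetric tree the submatrices $\hat B_1$ and $\hat B_2$ will correspond to the Laplacian of the subtree $G$ (possibly rooted or with a boundary edge of weight $w_1$), and one must check that the relevant cofactors count the edge-weight products of spanning subforests of $G$ in the way the parity argument requires. Once this bookkeeping is in place the parity contradictions are immediate from the arguments already appearing in Theorem~\ref{thm:L}, and the proof concludes in both parities.
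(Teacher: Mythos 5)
Your proposal follows the paper's own argument essentially verbatim: Lemma~\ref{lem:CB} together with the symmetric/antisymmetric eigenvector lifts and the no-zero-entry hypothesis gives the even/odd split of the spectrum between $B_2$ and $B_1$, and the weighted matrix tree theorem computations from Theorem~\ref{thm:L} then yield the same divisibility and parity contradictions, with $n=4$ reducing to the path case. One small correction: for $n$ odd the block $B_1=E$ is not itself a tree Laplacian but rather the Laplacian of $G$ perturbed by $w_1 e_k e_k^T$ at the attachment vertex (were it a Laplacian, $\det B_1$ would vanish and the identity $n\det B_1=\alpha_3\alpha_5\cdots\alpha_n$ would be vacuous); the determinant computation you actually carry out is the correct one, so this is only a slip of terminology.
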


Symmetric trees are special cases of graphs with an involution; PST and PGST properties of such graphs were studied in \cite{KemptonPGST} under XX dynamics.  The Hamiltonian considered in \cite{KemptonPGST} was the adjacency matrix plus a diagonal matrix rather than the weighted Laplacian matrix considered here.  The results in \cite{KemptonPGST} and those in this section are independent in that neither implies the other.

\section{Adjacency Matrices and the Rational weights conjecture}\label{sec:conjrat}

Henceforth, we focus on XX dynamics (and the corresponding adjacency matrix).  We give a complete analysis of the $4\times 4$ and $5\times 5$ cases and prove a more general result motivated by an observation made in the $4\times 4$ case.

\begin{example}
For a weighted path (with no loops) on 4 vertices, using a similar computation to that in the proof of Theorem \ref{thm:L} for the $n=4$ case,
and with eigenvalues $\alpha_1, \dots, \alpha_4$ of the adjacency matrix written as  $-\beta_2, -\beta_1$, $\beta_1, \beta_2$, we have $r_2=\beta_2-\beta_1$
and $r_1=\sqrt{\beta_1\beta_2}$. The Hamiltonian is then
\begin{eqnarray}
A=\begin{bmatrix}
0&\sqrt{\beta_1\beta_2}&0&0\\
\sqrt{\beta_1\beta_2}&0&\beta_2-\beta_1&0\\
0&\beta_2-\beta_1&0&\sqrt{\beta_1\beta_2}\\
0&0&\sqrt{\beta_1\beta_2}&0\\
\end{bmatrix}.
\end{eqnarray}

It is clear from the above matrix why PST does not occur in the unweighted path on 4 vertices. In this case $r_1=r_2=1$ and therefore  $\beta_1=\frac{-1+\sqrt{5}}{2}$ and $\beta_2=\frac{1+\sqrt{5}}{2}$. Since $\beta_2/\beta_1$ is irrational, there is no nonzero constant $\kappa$ for which both $\kappa \beta_1$ and $\kappa \beta_2$ are integers. Hence the unweighted path on four vertices cannot have PST (this is shown more generally for unweighted paths of length four or greater with loops in \cite{KemptonPST}).  More generally no weighted path without potentials on four vertices with all rational weights can have PST.  By Remark \ref{rmk:multpi2}, we set $\beta_1=1\modn 4$ and $\beta_2=3\modn 4$ without loss of generality (we scale the Hamiltonian by a factor 2 to have integer eigenvalues). It follows that $\beta_2-\beta_1$ is an even integer. However, $\beta_1\beta_2\equiv 3 \modn4$, so the quantity is not a perfect square, and therefore $\sqrt{\beta_1\beta_2}$ is irrational.  There is no nonzero constant $\kappa$ for which both $\kappa (\beta_2-\beta_1)$ and $\kappa \sqrt{\beta_1\beta_2}$ are rational.  This observation motivates a more general result which we will present after we analyse the $5\times 5$ case.

\end{example}

\begin{example}[$5\times 5$ Case]
For $n=5$, we can solve for $q_3$ in terms of the eigenvalues as before.  However the trace equation now has two unknowns $q_1$ and $q_2$, so we cannot use our previous method to solve for these entries.  The case with no potentials is still amenable.  The eigenvalues of the adjacency matrix in this case are $-\beta_2<-\beta_1<\beta_0=0<\beta_1<\beta_2$ and the two polynomials are $p_5(x)=x(x^2-\beta_2^2)(x^2-\beta_1^2)=x^5-(\beta_2^2+\beta_1^2)x^3+\beta_1^2\beta_2^2x$ and $\tilde{p}_4(x)=b_4x^4+b_2x^2+b_0$ for some real numbers $b_0, b_2, b_4$. The system of equations to solve is
\begin{eqnarray}
b_4\beta_2^4+b_2\beta_2^2+b_0=1\\
b_4\beta_1^2+b_2\beta_1^2+b_0=-1\\
b_0=1
\end{eqnarray}
which has the corresponding solutions $\displaystyle b_4= \frac{2}{\beta_1^2(\beta_2^2-\beta_1^2)}$, and $\displaystyle b_2=\frac{-2\beta_2^2}{\beta_1^2(\beta_2^2-\beta_1^2)}$, $\displaystyle b_0=1%=\frac{\beta_1^2(\beta_2^2-\beta_1^2)}{\beta_1^2(\beta_2^2-\beta_1^2)}
$. We now have $\tilde{p}_4(x)=\frac{2[x^4-\beta_2^2x^2+\frac{1}{2}\beta_1^2(\beta_2^2-\beta_1^2)]}{\beta_1^2(\beta_2^2-\beta_1^2)}$ with the monic version being $p_4(x)=x^4-\beta_2^2x^2+\frac{1}{2}\beta_1^2(\beta_2^2-\beta_1^2)$. Now performing the subtraction $p_5-xp_4$ yields $r_1^2=\beta_1^2$ and the new monic polynomial $p_3(x)=x^3-\frac{1}{2}(\beta_1^2+\beta_2^2)x$. Repeating this again with $p_4$ and $p_3$ gives $r_2^2=\frac{1}{2}(\beta_2^2-\beta_1^2)$. The Hamiltonian is now
\begin{eqnarray}
A=\begin{bmatrix}
0&\beta_1&0&0&0\\
\beta_1&0&\sqrt{\frac{\beta_2^2-\beta_1^2}{2}}&0&0\\
0&\sqrt{\frac{\beta_2^2-\beta_1^2}{2}}&0&\sqrt{\frac{\beta_2^2-\beta_1^2}{2}}&0\\
0&0&\sqrt{\frac{\beta_2^2-\beta_1^2}{2}}&0&\beta_1\\
0&0&0&\beta_1&0
\end{bmatrix}.
\end{eqnarray}
\end{example}

Theorem \ref{thm:L} tells us that no weighted path of length at least 3 has Laplacian PST between its end vertices. Contrast this with the adjacency matrix setting, where there is a weighted path (with no loops) of any length that admits PST between
its end vertices.   We have a conjecture about the weights: if all the weights of a weighted path on at least 4 vertices are rational numbers, then %under the XX dynamics,
there is no adjacency matrix PST at time $\pi$ between the end vertices of the path. We confirm that conjecture in the cases that $n=4$, $n\equiv 5\modn 8$ and for $n\equiv 3\modn 8$ but $n\neq 3$.
%provide the following conjecture on the weights, which  encompasses both the case of weighted paths without loops as well as the more general case of weighted paths  with loops.

\begin{proposition}\label{prop:rat}
Suppose that $n=4,$ or $n\geq 5$ and $n\equiv 3\modn 8$ or $n\equiv 5\modn 8$.
If the weights of a weighted path on $n$ vertices with or without potentials are all rational numbers, %or are all rational multiples of the same irrational number,
then there is no adjacency matrix PST between its end vertices at readout time $\pi$.
\end{proposition}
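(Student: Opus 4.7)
The plan is to apply Corollary~\ref{cor:middle} in each case to isolate a specific weight, and then to show via $2$-adic arithmetic that this weight cannot be rational. The parity data on the eigenvalues comes from Remarks~\ref{rmk:multpi2} and~\ref{rmk:multpi}.

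For the odd-$n$ cases ($n \equiv 3 \modn 8$ with $n \geq 5$, and $n \equiv 5 \modn 8$), Corollary~\ref{cor:middle} gives $r_{(n-1)/2}^2 = (S_2 - S_1^2)/4$. For $n$ odd one has $\sum_{r=1}^n (-1)^{r+n} = 1$, so a routine computation shows that $S_2 - S_1^2$ is invariant under a uniform shift of all eigenvalues. This lets me assume $\alpha_1$ is odd (using Remark~\ref{rmk:multpi}) even in the no-potentials subcase, where Remark~\ref{rmk:multpi2} guarantees integer eigenvalues alternating in parity without any actual shift. Among the $n$ integer eigenvalues there are then $(n+1)/2$ odd-valued and $(n-1)/2$ even-valued; using that every odd square is $\equiv 1 \modn 8$ and every even square is $\equiv 0 \modn 4$, I obtain $S_2 \equiv (n+1)/2 \modn 4$ and $S_1^2 \equiv 0$ or $1 \modn 4$ according to the parity of $S_1$. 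Both residue classes $n \equiv 3, 5 \modn 8$ yield $S_2 - S_1^2 \equiv 2 \modn 4$, so $(S_2 - S_1^2)/4$ is of the form (odd integer)$/2$; a $2$-adic valuation comparison in the would-be identity $2p^2 = (2\ell+1)q^2$ then rules out any rational square root, contradicting the rationality of $r_{(n-1)/2}$.

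For $n = 4$ without potentials, the desired conclusion is already contained in the example immediately preceding the proposition. For $n = 4$ with potentials, Corollary~\ref{cor:middle} gives only $r_2 = S_1/2$, which is automatically a positive integer under the hypotheses, so the contradiction must instead come from the outer weight $r_1$. I would use Lemma~\ref{lem:CB} to block-decompose the adjacency matrix into the $2 \times 2$ matrices $B_1, B_2$ with eigenvalue sets $\{\alpha_1, \alpha_3\}$ and $\{\alpha_2, \alpha_4\}$, and then solve the resulting four trace/determinant equations (parallel to the calculation in the proof of Theorem~\ref{thm:L}) to obtain
\[
r_1^2 \;=\; \frac{P\,S_2 \;-\; S\,S_1^2}{2\,S_1^2}, \qquad P = \alpha_2\alpha_4 - \alpha_1\alpha_3, \quad S = \alpha_2\alpha_4 + \alpha_1\alpha_3.
\]
After shifting so that $\alpha_1$ is odd, $P$ is odd, $S_2 \equiv 2 \modn 4$, and $S_1$ is even; a direct $2$-adic count then shows that $M := (P S_2 - S S_1^2)/2$ is an odd integer and $r_1^2 = M/S_1^2$. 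Writing $\alpha_1 = 2a+1$, $\alpha_2 = 2f$, $\alpha_3 = 2b+1$, $\alpha_4 = 2g$, I would finish with a four-case analysis on the parities of $a+b$ and $f+g$ (noting that the parity of $S_1/2 = f+g-a-b-1$ is forced by their sum), showing $M \equiv 3 \modn 4$ in every sub-case. Since odd perfect squares are $\equiv 1 \modn 4$, this prevents $M$ from being a perfect integer square, forcing $r_1$ to be irrational.

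The only genuinely delicate step is the $n=4$ with-potentials case: Corollary~\ref{cor:middle} does not pin down $r_1$, so one must go through the block decomposition to extract an explicit expression and then perform the parity casework to confirm $M \not\equiv 1 \modn 4$. The odd-$n$ cases, by contrast, fall out of Corollary~\ref{cor:middle} directly once the parity bookkeeping on $S_1$ and $S_2$ is done.
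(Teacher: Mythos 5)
Your proposal is correct and follows essentially the same route as the paper: Corollary \ref{cor:middle} isolates $r_{(n-1)/2}$ for the odd cases and the $B_1,B_2$ trace/determinant equations yield the same formula $r_1^2=\frac{(\alpha_2\alpha_4-\alpha_1\alpha_3)S_2-(\alpha_2\alpha_4+\alpha_1\alpha_3)S_1^2}{2S_1^2}$ for $n=4$, with the contradiction in every case coming from showing the relevant quantity has the wrong residue ($S_2-S_1^2\equiv 2\modn 4$, resp.\ $M\equiv 3\modn 4$) to be a rational square. The only difference is cosmetic: the paper reaches these congruences by expanding $S_2-S_1^2$ and counting odd--odd cross terms (resp.\ by an algebraic regrouping of the numerator into terms divisible by $4$ plus $-\alpha_1^2\alpha_3^2$), whereas you compute $S_2$ and $S_1$ modulo small powers of $2$ directly and propose a finite case check for $n=4$; both verifications go through.
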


\begin{proof}%[Partial proof]
As discussed, for a weighted path that exhibits PST at time $\pi$  between its end vertices, % we can scale the adjacency matrix (to have PST at time $\pi$), and
by performing an overall energy shift if necessary (which does not change the PST readout time), we can make all its eigenvalues integers (in particular, with the smallest one being an odd integer) with alternating parity. See Remark \ref{rmk:multpi}. %Note that such a shift does not change the PST readout time.

 For $n=4$, from Corollary \ref{cor:middle},  $r_2=\frac{S_1}{2}\in \mathbb{Q}$. We show that $r_1=\sqrt{\frac{(\alpha_2\alpha_4-\alpha_1\alpha_3)S_2-(\alpha_2\alpha_4+\alpha_1\alpha_3)S_1^2}{2S_1^2}}$ (calculated in Theorem \ref{sec:thmL}) cannot be rational by showing that $\displaystyle \frac{(\alpha_2\alpha_4-\alpha_1\alpha_3)S_2-(\alpha_2\alpha_4+\alpha_1\alpha_3)S_1^2}2$ is not a perfect square. Rearranging the terms we have
\begin{eqnarray}\label{eq:r1}
\begin{split}
&\frac12(\alpha_2\alpha_4(S_2-S_1^2)-\alpha_1\alpha_3(S_2+S_1^2))\\
&\quad=-\alpha_2\alpha_4(\alpha_1^2+\alpha_3^2)-\alpha_1\alpha_3(\alpha_2^2+\alpha_4^2)+\alpha_2\alpha_4%+\alpha_1\alpha_3)
\\
&\quad \times (\alpha_1\alpha_2-\alpha_1\alpha_3+\alpha_1\alpha_4+\alpha_2\alpha_3-\alpha_2\alpha_4+\alpha_3\alpha_4)\\
&\quad -\alpha_1\alpha_3\alpha_2\alpha_4+\alpha_1\alpha_3(\alpha_1+\alpha_3)(\alpha_2+\alpha_4)-\alpha_1^2\alpha_3^2
\end{split}
\end{eqnarray}

\iffalse
The first two terms as well as the $\alpha_2\alpha_4$ term in equation (\ref{eq:r1}) are $\equiv 0\modn 4$, while the term $\alpha_2\alpha_4+\alpha_1\alpha_3\equiv \alpha_1\alpha_3\modn 4\equiv 1 \textnormal{ or } 3 \modn 4$. This then multiplies the $\alpha_1\alpha_3$ term to get $3\modn 4$. The remaining $\alpha_1\alpha_2$, $\alpha_1\alpha_4$, $\alpha_2\alpha_3$, and $\alpha_3\alpha_4$ are $\equiv 0\textnormal{ or } 2\modn 4$. When these terms are multiplied by  $\alpha_2\alpha_4+\alpha_1\alpha_3$  we obtain either $0\textnormal{ or } 2\modn 4$. Thus there is no way to get rid of the one term $\alpha_1^2\alpha_3^2\equiv 3 \modn 4$, and thus equation (\ref{eq:r1}), with or without the 2 in the denominator, is not a perfect square. Thus $r_1$ is not rational and the conjecture holds for $n=4$.
\fi
From the fact that $\alpha_1 $ and $\alpha_3$ are odd integers, and $\alpha_2$ and $\alpha_4$ are even integers, we know the first 5 terms in the summand are all divisible by 4, and therefore their sum is congruent to either $0\modn 8$ or $4 \modn 8$. Since the square $\alpha_1^2\alpha_3^2$ of an odd integer $\alpha_1\alpha_3$ is congruent to $1 \modn 8,$ we know the result in equation (\ref{eq:r1}) is congruent to either $ 3\modn 8$ or $ 7\modn 8$, and hence is not a perfect square. Thus $r_1$ is not rational, which establishes the result for $n=4$.  In fact we can say more: if the weights of a weighted path on 4 vertices are all rational numbers, then there is no adjacency matrix PST between its end vertices at any time, since we can not scale the adjacency matrix such that $r_1$ and $r_2$ are both rational.
\medskip

\medskip

Next, suppose that $n\geq 5$ and $n\equiv 3\modn 8$ or $n\equiv 5\modn 8$. Observe that
 since $n$ is odd, by Corollary \ref{cor:middle}  we have $\displaystyle r_{\frac{n-1}2}=\frac{\sqrt{S_2-S_1^2}}2$. % and $q_{\frac{n+1}2}=S_1$.
We claim now that the quantity $S_2-S_1^2$ is not a perfect square, and so $r_\frac{n-1}{2}$ is irrational. %, while $S_1$ is rational, a contradiction.
%We therefore will not be able to make  $\sqrt{S_2-S_1^2}$ rational without affecting the readout time.
To see the claim, note that $S_2-S_1^2=-2[\sum_{r}\alpha_{2r}^2+\sum_{1\leq j<k\leq n}(-1)^{j+k}\alpha_j\alpha_k]$, where $\sum_r \alpha_{2r}^2$ is divisible by 4. Consequently, $2\sum_{r}\alpha_{2r}^2\equiv 0 \modn 8$.

If we can show that $\sum_{1\leq j<k\leq n}(-1)^{j+k}\alpha_j\alpha_k$ is odd, then we can conclude that $S_2-S_1^2$ is not a perfect square. To this end, it is enough to count the number of distinct pairs of odd numbers appearing in the summation.  If $n=2m-1$ for some $m\in \Z$, then $m$ is the number of odd numbers in the sum, and the number of distinct odd pairs is $m(m-1)/2$. For $n\equiv 3\modn 8$ or $n\equiv 5\modn 8,$ we have $m\equiv 2\modn 4$ or $m\equiv 3\modn 4$, respectively. In either case, $m(m-1)/2$ is odd and the claim follows.  Thus $S_2-S_1^2$ is not a perfect square, so $r_\frac{n-1}{2}$ is not rational.

\end{proof}

\section{Acknowledgements}

 S.K., R.P., and S.P. were supported by NSERC Discovery Grants RGPIN/6123-2014, 400550, and 1174582, respectively;  S.P. is also supported by the Canada Foundation for Innovation and the Canada Research Chairs Program.
D.M.\ was supported through a NSERC Undergraduate Student Research Award; X.Z.\ was supported by the University of Manitoba's Faculty of Science and Faculty of Graduate Studies.

\end{document}